\def\MatrixFont{\bf}
\def\VectorFont{\bf}
\newcommand{\mA}{{\MatrixFont A}}
\newcommand{\mB}{{\MatrixFont B}}
\newcommand{\mC}{{\MatrixFont C}}
\newcommand{\mE}{{\MatrixFont E}}
\newcommand{\mF}{{\MatrixFont F}}
\newcommand{\mI}{{\MatrixFont I}}
\newcommand{\mJ}{{\MatrixFont J}}
\newcommand{\mK}{{\MatrixFont K}}
\newcommand{\mM}{{\MatrixFont M}}
\newcommand{\mN}{{\MatrixFont N}}
\newcommand{\mP}{{\MatrixFont P}}
\newcommand{\mQ}{{\MatrixFont Q}}
\newcommand{\mR}{{\MatrixFont R}}
\newcommand{\mT}{{\MatrixFont T}}
\newcommand{\mV}{{\MatrixFont V}}
\newcommand{\mW}{{\MatrixFont W}}
\newcommand{\mX}{{\MatrixFont X}}
\newcommand{\mY}{{\MatrixFont Y}}
\newcommand{\mZ}{{\MatrixFont Z}}
\newcommand{\ve}{{\VectorFont e}}
\newcommand{\vu}{{\VectorFont u}}
\newcommand{\vv}{{\VectorFont v}}
\newcommand{\vw}{{\VectorFont w}}
\newcommand{\vx}{{\VectorFont x}}
\newcommand{\vy}{{\VectorFont y}}
\newcommand{\vz}{{\VectorFont z}}
\def\diag{\qopname\relax o{diag}}
\newtheorem{theorem}{Theorem}[section]
\newtheorem{lemma}[theorem]{Lemma}
\theoremstyle{definition}
\theoremstyle{remark}
\newtheorem{remark}[theorem]{Remark}
\theoremstyle{algorithm}
\newtheorem{algorithm}[theorem]{Algorithm}
\theoremstyle{corollary}
\theoremstyle{example}
\title{Set-Membership Information Fusion for Multisensor Nonlinear Dynamic Systems}
\author{Zhiguo Wang,  Xiaojing Shen, and Yunmin Zhu  \thanks{This work was supported in part by  the open research funds of BACC-STAFDL of China
under Grant No. 2015afdl010,  the special funds of NEDD of China under Grant No. 201314, the NSFC No. 61673282 and the PCSIRT15R53. Zhiguo Wang, Xiaojing Shen (corresponding author), and Yunmin Zhu  are  with Department of Mathematics, Sichuan
University, Chengdu, Sichuan 610064, China. E-mail: wangzg315@126.com, shenxj@scu.edu.cn,  ymzhu@scu.edu.cn.}}
\begin{document}
 \maketitle
\begin{abstract}
The set-membership information fusion problem is investigated for general multisensor nonlinear dynamic systems. Compared with linear dynamic systems and point estimation fusion in mean squared error  sense, it is a more challenging nonconvex optimization problem. Usually, to solve this problem, people try to find an efficient or heuristic fusion algorithm.   It is no doubt that an analytical fusion formula should be much significant for rasing accuracy and reducing computational burden. However, since it is a more complicated than the convex quadratic optimization problem for linear point estimation fusion, it is not easy to get the analytical fusion formula. In order to overcome the difficulty of this problem, two popular fusion architectures are considered: centralized and distributed set-membership information fusion. Firstly, both of them can be converted into a semidefinite programming  problem which can be efficiently computed, respectively. Secondly,  their analytical solutions can be derived surprisingly by using decoupling technique. It is very interesting that they are quite similar in form to the classic information filter. In the two analytical fusion formulae, the information of each sensor can be clearly characterized, and the knowledge of the correlation among measurement noises across sensors are not required. Finally,  multi-algorithm fusion is used to minimize the size of the state bounding ellipsoid by complementary advantages of multiple parallel algorithms. A typical numerical example in target tracking demonstrates the effectiveness of the centralized, distributed, and multi-algorithm set-membership fusion algorithms. In particular, it shows that multi-algorithm fusion performs better than the centralized and distributed fusion.
\end{abstract}

\noindent{\bf keywords:} Nonlinear dynamic systems, multisensor fusion, target tracking, unknown but bounded noise, set-membership filter

\section{Introduction}\label{sec_1}
In recent years, the multisensor estimation fusion or data fusion has received significant attention for target tracking, artificial intelligence, sensor networks and big data (see \cite{Goodman-Mahler-Nguyen97,Zhu-Li06,Zhu-Zhou-Shen-Song-Luo12,Zheng15}), since many practical problems involve information or data from multiple sources. The problem of multisensor estimation fusion is that how to optimally fuse sensor data from multiple sensors to provide more useful and accurate information for the purpose of estimating an unknown process state \cite{Li-Zhu-Wang-Han03}. Currently, the estimation fusion technology has rapidly evolved from a loosely related techniques to an emerging real engineering discipline with standardized terminology \cite{Hall-Llinas97}.

Generally speaking, there are two traditional architectures for estimation fusion, namely, centralized fusion structure and distributed fusion structure. The centralized architecture is sending the raw data of each sensor to the fusion center, theoretically, which is nothing but an estimation problem with distributed data. Moreover, the centralized fusion approach can usually reach optimal linear estimation in mean squared error (MSE) sense \cite{Hall-Llinas97}\footnote{For nonlinear estimation, the centralized fusion cannot guarantee in general to  reach the optimal estimation.}.
However, the distributed architecture is propagating the estimation of each sensor to the fusion center, which decreases computational burden in the fusion center, but it may not get the optimal linear estimation in MSE sensse. Due to its important practical significance, distributed estimation fusion has been studied extensively, see \cite{Li-Zhu-Wang-Han03}, \cite{Liggins-Chong-Kadar-Alford-Vannicola-Thomopoulos97}, \cite{Fang-Li09}, \cite{Kar-Varshney13}, \cite{Vempaty-Han-Varshney14}.

For multisensor point estimation fusion in probabilistic setting, many results have been obtained (see, e.g., books \cite{BarShalom-Li95}, \cite{Varshney97}, \cite{Zhu03}).  \cite{Li-Zhu-Wang-Han03} provides the optimal linear estimation fusion method for a unified linear model. \cite{Hashemipour-Roy-Laub88} proves that the distributed fusion algorithm is equivalent to the optimal centralized Kalman filtering in the case of cross-uncorrelated sensor noises,  and the one for the case of cross-correlated sensor noises is proposed in \cite{Song-Zhu-Zhou-You07}. When there exists the limitation of communication bandwidth between a fusion center and sensors, \cite{Zhu-Song-Zhou-You05} achieves a constrained optimal estimation at the fusion center. In addition, \cite{Duan-Li11} proposes lossless linear transformation of the raw measurements of each sensor for distributed estimation fusion. Most existing information fusion algorithms are based on the sequential estimation techniques such as Kalman filter, information filter and the weighted least-squares methods \cite{BarShalom-Li-Kirubarajan01}, which need to know the accurate statistical knowledge of the process and measurement noises.

Since the limitation of human and material resources in real life, we cannot obtain the exact statistical characteristics of noise, which may lead to poor performance for the state estimation (see \cite{Theodor-Shaked-deSouza94}, \cite{Zhu12}). Especially for the nonlinear target tracking systems, it is more sensitive to the precise distribution information of noise. In many engineering applications, it is easier to obtain the upper bound and lower bound of a unknown noise \cite{Jaulin-Kieffer-Didrit-Walter01}. In the unknown but bounded setting, the earliest work about the set-membership filter is proposed by \cite{Schweppe68} at the end of 1960s, and it is later developed by \cite{Levinbook-Wong08} and \cite{ElGhaoui-Calafiore01}. These robust filters are derived through set-membership estimate, usually a bounding ellipsoid of containing the true state. Moreover, the set-membership filter for nonlinear dynamic system has also been  investigated by \cite{Scholte-Campbell03}, \cite{Becis-Aubry-Ramdani12}, \cite{Wang-Shen-Zhu-Pan16} and references therein.

For multisensor set-membership fusion in bounded setting, \cite{Wang-Li12} proposes a relaxed Chebyshev center covariance
intersection (CI) algorithm to fuse the local estimates, geometrically, which is the center of the minimum radius ball enclosing the intersection of estimated ellipsoids of each sensor. In order to account for the inconsistency problem of the local estimates, \cite{Uhlmann03} proposes a covariance union method (CU) and it is more conservative than CI fusion. However, the judgment and calculation about correlation may be difficult. Since the set-membership filter only needs to know the bound of the noises, rather than the statistical properties of noises, it does not require to judge the correlation between each sensor, which inspires us to consider set-membership information fusion. When the dynamic system is linear dynamic systems, \cite{Shen-Zhu-Song-Luo11} proposes some algorithms of multisensor set-membership information fusion to minimize Euclidean estimation error of the state vector. However, for  nonlinear dynamic systems, the multisensor set-membership information fusion has not received enough research attention. These facts motivate us to further research the more challenging set-membership fusion problem for nonlinear dynamic systems.

%There are two major difficulties: one problem is dealing with the nonlinear dynamic system and the other is achieving the general set-membership fusion method. Since the bounding ellipsoid has a more freedom degree for optimization than the bounding ball, then we use ellipsoid to cover the intersection set of sensor bounding ellipsoids and the bounding ellipsoid derived from the nonlinear state equation. When the estimated sets are characterized by ellipsoids and the dynamic systems are nonlinear, our motivation  behind this paper comes from the fact the  Monte Carlo set-membership filter \cite{Wang-Shen-Zhu-Pan16} can effectively take advantage of the character of the nonlinear function, and it can find a tighter state bounding ellipsoid. Thus, taken into account these advantages, we concentrate on deriving the set-membership information fusion for multisensor nonlinear dynamic system from the optimization point of view, which can achieve a minimized estimation ellipsoid to contain the true state.

In this paper, two popular fusion architectures are considered: centralized and distributed set-membership information fusion. Firstly, both of them can be converted into a semidefinite programming (SDP) problem which can be efficiently computed, respectively. Secondly,  their analytical solutions can be derived surprisingly by using decoupling technique. It is very interesting that they are quite similar in form to the classic information filter in MSE sense \cite{BarShalom-Li-Kirubarajan01}. In the two analytical fusion formulae, the information of each sensor can be clearly characterized, and the knowledge of the correlation among measurement noises across sensors are not required. Finally,  multi-algorithm fusion is used to minimize the size of the state bounding ellipsoid by complementary advantages of multiple parallel algorithms. A typical numerical example in target tracking demonstrates the effectiveness of the centralized, distributed, and multi-algorithm set-membership fusion algorithms. In particular, it shows that multi-algorithm fusion performs better than both the centralized and distributed fusion.

The rest of the paper is organized as follows. Section \ref{sec_2} introduces the problem formulation for the centralized fusion and the distributed fusion. In Section \ref{sec_3}, the centralized set-membership information fusion algorithm is derived by $\mathcal {S}$-procedure, Schur complement and decoupling technique. Section \ref{sec_4} provides the distributed set-membership information fusion algorithm. A typical example in target tracking is presented in Section \ref{sec_5}, while conclusion is drawn in Section \ref{sec_6}.

\section{Preliminaries}\label{sec_2}
\subsection{Problem Formulation for Centralized Fusion}\label{sec_2_1}
Consider the $L$-sensor \emph{centralized} nonlinear dynamic system with unknown but bounded noises as follows:
\begin{eqnarray}%
\label{Eqpre_1}\vx_{k+1}&=&f_k(\vx_k)+\vw_k,\\
\label{Eqpre_2}\vy_k^i&=&h_k^i(\vx_k)+\vv_k^i,~i=1,\ldots, L,
\end{eqnarray}
where $\vx_k\in \mathcal {R}^n$ is the state of system at time $k$, $\vy_k^i\in \mathcal {R}^{m}$ is the measurement at the $i$th sensor, $i=1,\ldots, L$, $f_k(\vx_k)$ is the nonlinear function of the state $\vx_k$, $h_k^i(\vx_k)$ is nonlinear measurement function of $\vx_k$ at the $i$th sensor, $\vw_k\in \mathcal {R}^n$ is the uncertain process noise and $\vv_k^i\in \mathcal {R}^{m}$ is the uncertain measurement noise. Assume that $\vw_k$ and $\vv_k^i$ are confined to specified ellipsoidal sets
\begin{eqnarray}
\nonumber\mW_k&=&\{\vw_k: \vw_k^T\mQ_k^{-1}\vw_k\leq1\}\\
\nonumber\mV_k^i&=&\{\vv_k^i: {\vv_k^i}^T({\mR_k^i})^{-1}\vv_k^i\leq1\}
\end{eqnarray}
where $\mQ_k$ and $\mR_k^i$ are the \emph{shape matrix} of the ellipsoids $\mW_k$ and $\mV_k^i$, $i=1,\ldots,L$, respectively. Both of them are known symmetric positive-definite matrices.

Suppose that when the nonlinear functions are linearized, the remainder terms can be bounded by an ellipsoid, respectively. Specifically,
by Taylor's Theorem, $f_k$ and $h_k^i$ can be linearized to
\begin{eqnarray}%
\label{Eqpre_3} f_k(\hat{\vx}_k+\mE_{f_k}\vu_k)=f_k(\hat{\vx}_k)+\mJ_{f_k}\mE_{f_k}\vu_k+\Delta f_k(\vu_k),\\
\label{Eqpre_4} h_k^i(\hat{\vx}_k+\mE_{h_k^i}\vu_k)=h_k^i(\hat{\vx}_k)+\mJ_{h_k^i}\mE_{h_k^i}\vu_k+\Delta h_k^i(\vu_k)
\end{eqnarray}
where $\mJ_{f_k}=\frac{\partial f_k(\vx_k)}{\partial \vx_k}|_{\hat{\vx}_k}$,
$\mJ_{h_k^i}=\frac{\partial h_k^i(\vx_k)}{\partial \vx_k}|_{\hat{\vx}_k}$,
  are Jacobian matrices.
 $\Delta f_k(\vu_k)$ and $\Delta h_k^i(\vu_k)$ are high-order remainders, which can be bounded in an ellipsoid for $\parallel\vu_k\parallel\leq1$, $i=1,\ldots, L$, respectively, i.e.,
\begin{eqnarray}%
\label{Eqpre_5} \Delta f_k(\vu_k)\in  \mathcal {E}_{f_k} &=&\{\vx\in
R^n:(\vx-\ve_{f_k})^T{(\mP_{f_k})}^{-1}(\vx-\ve_{f_k})\leq1\},\\
\label{Eqpre_6}&=&\{\vx\in R^n: \vx=\ve_{f_k}+\mB_{f_k}\Delta_{f_k}, \mP_{f_k}=\mB_{f_k}\mB_{f_k}^T, \parallel\Delta_{f_k}\parallel\leq1\},\\
\label{Eqpre_7} \Delta h_k^i(\vu_k)\in  \mathcal {E}_{h_k^i} &=&\{\vx\in
R^m:(\vx-\ve_{h_k^i})^T{(\mP_{h_k^i})}^{-1}(\vx-\ve_{h_k^i})\leq1\},\\
\label{Eqpre_8}&=&\{\vx\in R^m: \vx=\ve_{h_k^i}+\mB_{h_k^i}\Delta_{h_k^i}, \mP_{h_k^i}=\mB_{h_k^i}\mB_{h_k^i}^T, \parallel\Delta_{h_k^i}\parallel\leq1\},
\end{eqnarray}
where $\ve_{f_k}$ and $\ve_{h_k^i}$ are the centers of the ellipsoids $\mathcal {E}_{f_k}$ and $\mathcal {E}_{h_k^i}$, respectively; $\mP_{f_k}$ and $\mP_{h_k^i}$ are the shape matrices of the ellipsoids $\mathcal {E}_{f_k}$ and $\mathcal {E}_{h_k^i}$, respectively. %\textcolor[rgb]{1.00,0.00,0.00}{Although we can use (\ref{Eqpre_3})-(\ref{Eqpre_4}) to translate the nonlinear system (\ref{Eqpre_1})-(\ref{Eqpre_2}) into a linear system, however, the result of \cite{Shen-Zhu-Song-Luo11} cannot be used directly, since every time we need to achieve the bounding ellipsoids  $\mathcal {E}_{f_k}$ and $\mathcal {E}_{h_k^i}$ to contain the high-order remainders.}
Note that \cite{Wang-Shen-Zhu-Pan16} proposes the Monte Carlo methods for the bounding ellipsoids of the remainders, which can effectively take advantage of the character of the nonlinear functions, and it can obtain the tighter bounding ellipsoids  $\mathcal {E}_{f_k}$ and $\mathcal {E}_{h_k^i}$ to cover the remainders on line. %Therefore, the corresponding Monte Carlo set-membership filter is efficient and can be used on line.

%Although we can use (\ref{Eqpre_3})-(\ref{Eqpre_4}) to translate the nonlinear system (\ref{Eqpre_1})-(\ref{Eqpre_2}) into a linear system, however, the result of \cite{Shen-Zhu-Song-Luo11} \textcolor[rgb]{1.00,0.00,0.00}{can not} be used directly, since every time we need to find the bound ellipsoids  $\mathcal {E}_{f_k}$ and $\mathcal {E}_{h_k^i}$ to contain the high-order remainders.
%Note that \cite{Wang-Shen-Zhu-Pan16} has used Monte Carlo sampling and convex optimization method to find the tight ellipsoids  $\mathcal {E}_{f_k}$ and $\mathcal {E}_{h_k^i}$ in real time, then we can extend \cite{Wang-Shen-Zhu-Pan16} to achieve the multisensor fusion estimated ellipsoid in nonlinear dynamic systems.

%In the subsequence, we assume that there are no communication
%delays between the local sensors and the fusion center.
%%and no outliers for the
%bounds of the uncertainties and noises.
The corresponding centralized set-membership information fusion problem can be formulated as follows.
Assume that the initial state $\vx_0$ belongs to a given bounding ellipsoid:
\begin{eqnarray}
\label{Eqpre_9} \mathcal {E}_0^c&=&\{\vx\in
R^n:(\vx-\hat{\vx}_0^c)^T(\mP_0^c)^{-1}(\vx-\hat{\vx}_0^c)\leq1\},
\end{eqnarray}
where $\hat{\vx}_0^c$ is the center of ellipsoid $\mathcal {E}_0^c$, and $\mP_0^c$ is the shape matrix of the ellipsoid $\mathcal {E}_0^c$ which is a known symmetric positive-definite matrix. At time $k$, given that $\vx_k$ belongs to a current bounding ellipsoid:
\begin{eqnarray}
\label{Eqpre_10}  \mathcal {E}_k^c&=&\{\vx\in
R^n:(\vx-\hat{\vx}_k^c)^T(\mP_k^c)^{-1}(\vx-\hat{\vx}_k^c)\leq1\}\\
\label{Eqpre_11}&=&\{\vx\in R^n: \vx=\hat{\vx}_k^c+\mE_k^c\vu_k, \mP_k^c=\mE_k^c{\mE_k^c}^T, \parallel\vu_k\parallel\leq1\},
\end{eqnarray}
where $\hat{\vx}_k^c$ is the center of ellipsoid $\mathcal {E}_k^c$, and $\mP_k^c$ is a known symmetric positive-definite matrix. At next time $k+1$, the fusion center can obtain the measurements $\vy_{k+1}^i$ from the $i$th sensor, $i=1,\ldots,L$.
For the centralized fusion system, the goal of the fusion center is to determine a  prediction ellipsoid $\mathcal {E}_{k+1|k}^c$ and an estimation ellipsoid $\mathcal {E}_{k+1}^c$ at time $k+1$. Firstly, in prediction step, we  look for $\hat{\vx}_{k+1|k}^c$ and $\mP_{k+1|k}^c$ such that the state $\vx_{k+1}$ belongs to
\begin{eqnarray}
\label{Eqpre_112}  \mathcal {E}_{k+1|k}^c&=&\{\vx\in
R^n:(\vx-\hat{\vx}_{k+1|k}^c)^T(\mP_{k+1|k}^c)^{-1}(\vx-\hat{\vx}_{k+1|k}^c)\leq1\}
\end{eqnarray}
whenever I) $\vx_k$ is in $\mathcal {E}_k^c$, II) the process noise  $\vw_k\in\mW_k$, and III) the remainder $\Delta f_k(\vu_k)\in  \mathcal {E}_{f_k}$.
Secondly, in the fusion update step, we look for $\hat{\vx}_{k+1}^c$ and $\mP_{k+1}^c$ such that the state $\vx_{k+1}$ belongs to
\begin{eqnarray}
\label{Eqpre_12}  \mathcal {E}_{k+1}^c&=&\{\vx\in
R^n:(\vx-\hat{\vx}_{k+1}^c)^T(\mP_{k+1}^c)^{-1}(\vx-\hat{\vx}_{k+1}^c)\leq1\}
\end{eqnarray}
whenever I) $\vx_{k+1|k}$ is in $\mathcal {E}_{k+1|k}^c$, II) measurement noises  $\vv_{k+1}^i\in\mV_{k+1}^i$, $i=1,\ldots,L$, and III) the remainders $\Delta h_{k+1}^i(\vu_{k+1})\in  \mathcal {E}_{h_{k+1}^i}$, $i=1,\ldots,L$.

Moreover, we provide a state bounding ellipsoid by minimizing its ``size" at each time which is a function of the shape matrix $\mP$ denoted by $f(\mP)$. If we choose trace function, i.e., $f(\mP)=tr(\mP)$, which means the sum of squares of semiaxes lengths of the ellipsoid $\mathcal {E}$, the other common ``size" of the ellipsoid is $logdet(\mP)$, which corresponds to the volume of the ellipsoid $\mathcal {E}$. In order to emphasize the importance of the interested state vector entry, \cite{Kiselev-Polyak92} proposes an objective of the ellipsoid $\mathcal {E}$ as follows
\begin{eqnarray}
\label{Eqpre_170} f(\mP)=\omega_1\mP_{11}+\omega_2\mP_{22}+ \ldots+\omega_n\mP_{nn}
\end{eqnarray}
where $\omega_i$ is the weight coefficient with $\omega_i>0, \sum_{i=1}^{n}\omega_i=1$, and $\mP_{ii}$ denotes the element in the $i$th row and the $i$th column of the matrix $\mP$, $i=1,\ldots,L$. If the bound of the $i$th entry of the interested state vector is very important, we can give a larger weight to $\omega_i$. When $\omega_i=\frac{1}{n}$, $i=1,\ldots,L$, which means that each entry of the state vector is treated equally, and it is also equivalent to the trace function.

Therefore, we can use  multi-algorithm fusion to obtain  multiple bounding estimated ellipsoids, which squashed along each entry of the state vector as much as possible based on different weighted objective (\ref{Eqpre_170}), then the intersection of these bounding ellipsoids can derive a final state bounding ellipsoid with a smaller size.

%At last, we present the estimation error of the state vector and the estimation error of every entry of the state vector.
\subsection{Problem Formulation for Distributed Fusion}\label{sec_2_2}
In this paper, we also consider $L$-sensor \emph{distributed} estimation fusion for the nonlinear dynamic system  (\ref{Eqpre_1}) and (\ref{Eqpre_2}). The problem is formulated as follows.

At time $k+1$, the $i$th local sensor can use the measurements $\mY_{k+1}^i\triangleq\{\vy_{1}^i, \vy_{2}^i,\ldots,\vy_{k+1}^i\}$ to obtain the bounding ellipsoid $\mathcal {E}_{k+1}^i$ by the single sensor recursive method \cite{Wang-Shen-Zhu-Pan16}. Then, the local estimated ellipsoids $\mathcal {E}_{k+1}^i$ are sent to the fusion center without communication delay for $i=1,\ldots, L$.
Suppose that the initial state $\vx_0$ belongs to a given bounding ellipsoid:
\begin{eqnarray}
\label{Eqpre_13} \mathcal {E}_0^d&=&\{\vx\in
R^n:(\vx-\hat{\vx}_0^d)^T(\mP_0^d)^{-1}(\vx-\hat{\vx}_0^d)\leq1\},
\end{eqnarray}
where $\hat{\vx}_0^d$ is the center of ellipsoid $\mathcal {E}_0^d$, and $\mP_0^d$ is the shape matrix of the ellipsoid $\mathcal {E}_0^d$ which is a known symmetric positive-definite matrix. At time $k$, given that $\vx_k$ belongs to a current bounding ellipsoid:
\begin{eqnarray}
\label{Eqpre_14}  \mathcal {E}_k^d&=&\{\vx\in
R^n:(\vx-\hat{\vx}_k^d)^T(\mP_k^d)^{-1}(\vx-\hat{\vx}_k^d)\leq1\}\\
\label{Eqpre_15}&=&\{\vx\in R^n: \vx=\hat{\vx}_k^d+\mE_k^d\vu_k, \mP_k^d=\mE_k^d{\mE_k^d}^T, \parallel\vu_k\parallel\leq1\},
\end{eqnarray}
where $\hat{\vx}_k^d$ is the center of ellipsoid $\mathcal {E}_k^d$, and $\mP_k^d$ is a known symmetric positive-definite matrix. At next time $k+1$, the fusion center can receive the state bounding ellipsoid of the $i$th sensor
\begin{eqnarray}
\label{Eqpre_16}  \mathcal {E}_{k+1}^i&=&\{\vx\in
R^n:(\vx-\hat{\vx}_{k+1}^i)^T(\mP_{k+1}^i)^{-1}(\vx-\hat{\vx}_{k+1}^i)\leq1\}.
\end{eqnarray}
Firstly, in prediction step, the goal of the fusion center is to determine a state bounding ellipsoid $\mathcal {E}_{k+1|k}^d$, i.e., look for $\hat{\vx}_{k+1|k}^d$ and $\mP_{k+1|k}^d$ such that the state $\vx_{k+1}$ belongs to
\begin{eqnarray}
\label{Eqpre_113}  \mathcal {E}_{k+1|k}^d&=&\{\vx\in
R^n:(\vx-\hat{\vx}_{k+1|k}^d)^T(\mP_{k+1|k}^d)^{-1}(\vx-\hat{\vx}_{k+1|k}^d)\leq1\}
\end{eqnarray}
whenever I) $\vx_k$ is in $\mathcal {E}_k^d$, II) the process noise $\vw_k\in\mW_k$, and III) the remainder $\Delta f_k(\vu_k)\in  \mathcal {E}_{f_k}$.
Secondly, in the fusion update step, we look for $\hat{\vx}_{k+1}^d$ and $\mP_{k+1}^d$ such that the state $\vx_{k+1}$ belongs to
\begin{eqnarray}
\label{Eqpre_17}  \mathcal {E}_{k+1}^d&=&\{\vx\in
R^n:(\vx-\hat{\vx}_{k+1}^d)^T(\mP_{k+1}^d)^{-1}(\vx-\hat{\vx}_{k+1}^d)\leq1\}
\end{eqnarray}
whenever I) $\vx_{k+1}$ is in $\mathcal {E}_{k+1|k}^d$, II) $\vx_{k+1}$ is in $\mathcal {E}_{k+1}^i$, $i=1,\ldots,L$. Moreover, we provide a state bounding ellipsoid by minimizing its ``size" in prediction and update step, respectively.

\section{Centralized Fusion}\label{sec_3}
In this section, we discuss the centralized set-membership estimation fusion, which includes the prediction step and the fusion update step. By taking full advantage of the character of the nonlinear dynamic system and the recent optimization method proposed in \cite{ElGhaoui-Calafiore01} for linear dynamic system, the centralized set-membership estimation fusion  can be achieved by solving an SDP problem, which can be efficiently computed by interior point methods \cite{Vandenberghe-Boyd96} and related softwares \cite{Lofberg04,Sturm99}. Furthermore, the centralized set-membership information filter is derived based on the decoupling technique, which can make further to improve the computation complexity of SDP. The analytical formulae of the state prediction and estimation bounding ellipsoid at time $k+1$ are proposed, respectively.
%, which is similar to the information filter in \cite{BarShalom-Li-Kirubarajan01}.
\subsection{Prediction Step}\label{sec_3-1}
In the prediction step,
the state prediction bounding ellipsoid  at time $k+1$ can be derived as follows.

%there are three ellipsoids $\mathcal {E}_{k}^c, ~\mathcal {E}_{f_k}$ and $\mW_k$, which contain the state of the system, the high-order remainders and the process noise at time $k$, respectively, in addition, by taking full advantage of the character of the nonlinear state function, we want to determine a tight ellipsoid that contains the true state of the system at time $k+1$, and this problem is represented by Theorem \ref{thm_1}. Further more, if we choose the trace criterion for the objective function, Theorem \ref{thm_1} will obtain a $\emph{closed}$ form and yields an optimal ellipsoid with center $\hat{\vx}_{k+1|k}^c$ and shape matrix $\mP_{k+1|k}^c$ in prediction step.

\begin{lemma}\label{thm_1}
At time $k+1$, based on the state bounding ellipsoid $\mathcal {E}_{k}^c$,
the remainder bounding ellipsoid $\mathcal {E}_{f_k}$ and the noise bounding ellipsoid $\mW_k$, the state prediction bounding ellipsoid $ \mathcal
{E}_{k+1|k}^c=\{\vx:(\vx-\hat{\vx}_{k+1|k}^c)^T(\mP_{k+1|k}^c)^{-1}(\vx-\hat{\vx}_{k+1|k}^c)\leq1\}$
can be obtained by solving the optimization problem in the variables
$\mP_{k+1|k}^c$, $\hat{\vx}_{k+1|k}^c$,  nonnegative scalars $\tau^u\geq0, \tau^w\geq0 , \tau^f\geq0$,
\begin{eqnarray}
\label{Eqpre_18} &&\min~~ f(\mP_{k+1|k}^c) \\[5mm]
\label{Eqpre_19} &&~~\mbox{subject to}~~ -\tau^u\leq0,~ -\tau^w\leq0,~ -\tau^f\leq0,\\[5mm]
\label{Eqpre_20} && \mP_{k+1|k}^c\succ0,\\[5mm]
%\label{Eqpre_104}&&\Bigg[\begin{array}{c}
%-\mP_{k+1|k}\\[3mm]
%(\Phi_{k+1|k}(\hat{\vx}_{k+1|k})(\Psi_{k+1|k}(\vy_{k}))_{\bot})^T\end{array}\\
%\nonumber&& \qquad\qquad~~\begin{array}{c}
%\Phi_{k+1|k}(\hat{\vx}_{k+1|k})(\Psi_{k+1|k}(\vy_{k}))_{\bot}\\[3mm]
%-(\Psi_{k+1|k}(\vy_{k}))_{\bot}^T\Xi(\Psi_{k+1|k}(\vy_{k}))_{\bot}\end{array} \Bigg]\preceq0,
\label{Eqpre_21}&&\left[\begin{array}{cc}
    \mP_{k+1|k}^c&\Phi_{k+1|k}(\hat{\vx}_{k+1|k}^c)\\[3mm]
    (\Phi_{k+1|k}(\hat{\vx}_{k+1|k}^c))^T& ~~\Xi\\
\end{array}\right]\succeq0,
\end{eqnarray}
where
\begin{eqnarray}
 \label{Eqpre_22} \Phi_{k+1|k}(\hat{\vx}_{k+1|k}^c)&=&[f_k(\hat{\vx}_{k}^c)+\ve_{f_k}-\hat{\vx}_{k+1|k}^c,~\mJ_{f_k}\mE_{k}^c,
 ~\mI,~\mB_{f_k}],\\[3mm]
 \label{Eqpre_23}\Xi &=&\diag(1-\tau^u-\tau^w-\tau^f,\tau^u\mI,\tau^w\mQ_k^{-1},\tau^f\mI),
\end{eqnarray}
 $\mE_{k}^c$ is the Cholesky factorization of $\mP_{k}^c$, i.e, $\mP_{k}^c=\mE_{k}^c(\mE_{k}^c)^T$, $\ve_{f_k}$ and $\mB_{f_k}$ are denoted by (\ref{Eqpre_6}), and $\mJ_{f_k}=\frac{\partial f_k(\vx_k)}{\partial \vx_k}|_{\hat{\vx}_k}$ is Jacobian matrix.
\end{lemma}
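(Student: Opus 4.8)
The plan is to reduce the set-containment requirement in (\ref{Eqpre_112}) to the single linear matrix inequality (\ref{Eqpre_21}) in three moves: write the predicted state as an affine image of a stacked uncertainty vector, invoke the $\mathcal{S}$-procedure to turn ``the output lies in the target ellipsoid whenever each uncertainty obeys its own norm bound'' into a matrix inequality, and finally eliminate the inverse $(\mP_{k+1|k}^c)^{-1}$ by a Schur complement.

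First I would parametrize the three independent sources of uncertainty by norm-bounded vectors: the current state by $\vx_k=\hat{\vx}_k^c+\mE_k^c\vu_k$ with $\parallel\vu_k\parallel\leq1$ from (\ref{Eqpre_11}); the process noise by $\vw_k^T\mQ_k^{-1}\vw_k\leq1$; and the linearization remainder by $\Delta f_k(\vu_k)=\ve_{f_k}+\mB_{f_k}\Delta_{f_k}$ with $\parallel\Delta_{f_k}\parallel\leq1$ from (\ref{Eqpre_6}). Substituting the Taylor expansion (\ref{Eqpre_3}) into the dynamics (\ref{Eqpre_1}) and collecting terms then yields
\[
\vx_{k+1}-\hat{\vx}_{k+1|k}^c=\Phi_{k+1|k}(\hat{\vx}_{k+1|k}^c)\,\vz,\qquad \vz=[\,1,\ \vu_k^T,\ \vw_k^T,\ \Delta_{f_k}^T\,]^T,
\]
where the constant column is $f_k(\hat{\vx}_k^c)+\ve_{f_k}-\hat{\vx}_{k+1|k}^c$ and the remaining blocks $\mJ_{f_k}\mE_k^c$, $\mI$, $\mB_{f_k}$ multiply $\vu_k$, $\vw_k$, $\Delta_{f_k}$, so that $\Phi_{k+1|k}$ is exactly the block row (\ref{Eqpre_22}).

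Next, the prediction requirement $\vx_{k+1}\in\mathcal{E}_{k+1|k}^c$ is equivalent to the implication that $\vz^T\Phi_{k+1|k}^T(\mP_{k+1|k}^c)^{-1}\Phi_{k+1|k}\,\vz\leq1$ holds for every $\vz$ with unit first coordinate subject to $\parallel\vu_k\parallel\leq1$, $\vw_k^T\mQ_k^{-1}\vw_k\leq1$, and $\parallel\Delta_{f_k}\parallel\leq1$. I would discharge the three quadratic constraints at once via the $\mathcal{S}$-procedure: introducing multipliers $\tau^u,\tau^w,\tau^f\geq0$ as in (\ref{Eqpre_19}), a sufficient condition for the implication is that the homogenized form $1-\vz^T\Phi_{k+1|k}^T(\mP_{k+1|k}^c)^{-1}\Phi_{k+1|k}\,\vz-\tau^u(1-\vu_k^T\vu_k)-\tau^w(1-\vw_k^T\mQ_k^{-1}\vw_k)-\tau^f(1-\Delta_{f_k}^T\Delta_{f_k})$ be nonnegative for all $\vz$. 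Written as a quadratic form in $\vz$, this is precisely $\Xi-\Phi_{k+1|k}^T(\mP_{k+1|k}^c)^{-1}\Phi_{k+1|k}\succeq0$ with $\Xi$ the block-diagonal matrix (\ref{Eqpre_23}); note that the $(1,1)$ entry $1-\tau^u-\tau^w-\tau^f$ is exactly where the three constant terms of the multipliers accumulate.

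Finally, since $\mP_{k+1|k}^c\succ0$ by (\ref{Eqpre_20}), the Schur complement converts $\Xi-\Phi_{k+1|k}^T(\mP_{k+1|k}^c)^{-1}\Phi_{k+1|k}\succeq0$ into the $2\times2$ block form (\ref{Eqpre_21}), and minimizing $f(\mP_{k+1|k}^c)$ over this feasible set selects the smallest certifiable prediction ellipsoid. The step to treat carefully is the $\mathcal{S}$-procedure: with three constraints it furnishes only a sufficient (generally not necessary) condition, and it additionally treats $\Delta_{f_k}$ as free of $\vu_k$ even though the true remainder $\Delta f_k(\vu_k)$ is coupled to the state. Both effects only enlarge the admissible set of the implication, so every feasible $(\mP_{k+1|k}^c,\hat{\vx}_{k+1|k}^c)$ still yields a valid outer ellipsoid for $\vx_{k+1}$; soundness is preserved and only tightness is relaxed, which is acceptable here since the objective is merely a minimal-size containing ellipsoid.
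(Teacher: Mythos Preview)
Your proof is correct and follows essentially the same route as the paper: parametrize $\vx_{k+1}-\hat{\vx}_{k+1|k}^c$ affinely in the stacked vector $[1,\vu_k^T,\vw_k^T,\Delta_{f_k}^T]^T$, apply the $\mathcal{S}$-procedure (Lemma~\ref{lem_1}) with multipliers $\tau^u,\tau^w,\tau^f\geq0$ to obtain $\Phi_{k+1|k}^T(\mP_{k+1|k}^c)^{-1}\Phi_{k+1|k}-\Xi\preceq0$, and then use the Schur complement (Lemma~\ref{lem_2}) to reach (\ref{Eqpre_21}). Your added remarks on the mere sufficiency of the multi-constraint $\mathcal{S}$-procedure and on the decoupling of $\Delta_{f_k}$ from $\vu_k$ are valid observations that the paper leaves implicit.
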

\textbf{Proof:} See Appendix.

\begin{remark}
The objective function (\ref{Eqpre_18}) is aimed at minimizing the shape matrix of the predicted ellipsoid, and the constraints (\ref{Eqpre_19})-(\ref{Eqpre_21}) ensure that the true state is contained in the the bounding ellipsoid $ \mathcal
{E}_{k+1|k}$.
\end{remark}

Interestingly, if the objective function is the trace of the shape matrix of the bounding ellipsoid, then the \emph{analytically optimal solution} of the optimization problem (\ref{Eqpre_18})-(\ref{Eqpre_21}) can be achieved for the sate prediction step.

\begin{theorem}\label{cor_1}
If the objective function $f(\mP_{k+1|k}^c)=tr(\mP_{k+1|k}^c)$, then the analytically optimal solution for the state prediction is as follows: %Theorem \ref{thm_1}
\begin{eqnarray}
 \label{Eqpre_24}\mP_{k+1|k}^c&=&\frac{\mJ_{f_k}\mP_k^c\mJ_{f_k}^T}{\tau_{opt}^u}+\frac{\mP_{f_k}}{\tau_{opt}^f}+\frac{\mQ_k}{\tau_{opt}^w},\\
 \label{Eqpre_25}\hat{\vx}_{k+1|k}^c&=&f_k(\hat{\vx}_{k}^c)+\ve_{f_k},
 %\label{Eqpre_26}\tau_{opt}^u&=&\frac{\sqrt{tr(\mJ_{f_k}\mP_k^c\mJ_{f_k}^T)}}{\sqrt{tr(\mJ_{f_k}\mP_k^c\mJ_{f_k}^T)}+\sqrt{tr(\mQ_k)}+\sqrt{tr(\mP_{f_k})}},\\
% \label{Eqpre_27}\tau_{opt}^f&=&\frac{\sqrt{tr(\mP_{f_k})}}{\sqrt{tr(\mJ_{f_k}\mP_k^c\mJ_{f_k}^T)}+\sqrt{tr(\mQ_k)}+\sqrt{tr(\mP_{f_k})}},\\
% \label{Eqpre_28}\tau_{opt}^w&=&\frac{\sqrt{tr(\mQ_k)}}{\sqrt{tr(\mJ_{f_k}\mP_k^c\mJ_{f_k}^T)}+\sqrt{tr(\mQ_k)}+\sqrt{tr(\mP_{f_k})}},
 \end{eqnarray}
where
\begin{eqnarray}
% \label{Eqpre_24}\mP_{k+1|k}&=&\frac{\mJ_{f_k}\mP_k^c\mJ_{f_k}^T}{\tau_{opt}^u}+\frac{\mP_{f_k}}{\tau_{opt}^f}+\frac{\mQ_k}{\tau_{opt}^w},\\
% \label{Eqpre_25}\hat{\vx}_{k+1|k}&=&f_k(\hat{\vx}_{k}^c)+\ve_{f_k},\\
 \label{Eqpre_26}\tau_{opt}^u&=&\frac{\sqrt{tr(\mJ_{f_k}\mP_k^c\mJ_{f_k}^T)}}{\sqrt{tr(\mJ_{f_k}\mP_k^c\mJ_{f_k}^T)}+\sqrt{tr(\mQ_k)}+\sqrt{tr(\mP_{f_k})}},\\
 \label{Eqpre_27}\tau_{opt}^f&=&\frac{\sqrt{tr(\mP_{f_k})}}{\sqrt{tr(\mJ_{f_k}\mP_k^c\mJ_{f_k}^T)}+\sqrt{tr(\mQ_k)}+\sqrt{tr(\mP_{f_k})}},\\
 \label{Eqpre_28}\tau_{opt}^w&=&\frac{\sqrt{tr(\mQ_k)}}{\sqrt{tr(\mJ_{f_k}\mP_k^c\mJ_{f_k}^T)}+\sqrt{tr(\mQ_k)}+\sqrt{tr(\mP_{f_k})}},
 \end{eqnarray}
$\mJ_{f_k}=\frac{\partial f_k(\vx_k)}{\partial \vx_k}|_{\hat{\vx}_k}$ is the Jacobian matrix of the nonlinear state function $f_k$ denoted by (\ref{Eqpre_3}),  $\ve_{f_k}$ and $\mP_{f_k}$ are the center and shape matrix of the bounding ellipsoid of the remainder denoted by (\ref{Eqpre_6}), respectively, and $\tau_{opt}^u$, $\tau_{opt}^w$, $\tau_{opt}^f$ are the optimal solution of the decision variables $\tau^u$, $\tau^w$, $\tau^f$, respectively.
\end{theorem}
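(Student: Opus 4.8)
The plan is to turn the linear matrix inequality (\ref{Eqpre_21}) into an explicit Löwner lower bound on $\mP_{k+1|k}^c$ via a Schur complement, and then to minimize its trace in closed form. First I would observe that, whenever $\Xi\succ0$, the block constraint (\ref{Eqpre_21}) is equivalent by the Schur complement to
\[
\mP_{k+1|k}^c\succeq \Phi_{k+1|k}(\hat{\vx}_{k+1|k}^c)\,\Xi^{-1}\,(\Phi_{k+1|k}(\hat{\vx}_{k+1|k}^c))^T .
\]
Since $\Xi$ in (\ref{Eqpre_23}) is block diagonal and $\mP_k^c=\mE_k^c(\mE_k^c)^T$, $\mP_{f_k}=\mB_{f_k}\mB_{f_k}^T$, a direct block-by-block multiplication of (\ref{Eqpre_22})--(\ref{Eqpre_23}) gives
\[
\Phi_{k+1|k}\,\Xi^{-1}\,\Phi_{k+1|k}^T
=\frac{\vd\,\vd^T}{1-\tau^u-\tau^w-\tau^f}
+\frac{\mJ_{f_k}\mP_k^c\mJ_{f_k}^T}{\tau^u}
+\frac{\mQ_k}{\tau^w}
+\frac{\mP_{f_k}}{\tau^f},
\]
where $\vd=f_k(\hat{\vx}_k^c)+\ve_{f_k}-\hat{\vx}_{k+1|k}^c$ (the inverse of the block $\tau^w\mQ_k^{-1}$ producing $\mQ_k/\tau^w$).

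Second, because $\trace(\cdot)$ is monotone with respect to the Löwner order and the objective (\ref{Eqpre_18}) depends on $\mP_{k+1|k}^c$ only through its trace, at the optimum the inequality is tight and $\mP_{k+1|k}^c$ equals the right-hand side above; note this right-hand side is automatically positive definite since $\mQ_k/\tau^w\succ0$, so (\ref{Eqpre_20}) is satisfied. Among the four terms only $\vd\,\vd^T/(1-\tau^u-\tau^w-\tau^f)$ depends on the center, it is positive semidefinite, and it vanishes exactly when $\vd=\vzero$. Hence the trace-optimal center is $\hat{\vx}_{k+1|k}^c=f_k(\hat{\vx}_k^c)+\ve_{f_k}$, which is (\ref{Eqpre_25}), and the residual shape matrix reduces precisely to (\ref{Eqpre_24}).

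Third, writing $a=\trace(\mJ_{f_k}\mP_k^c\mJ_{f_k}^T)$, $b=\trace(\mQ_k)$, $c=\trace(\mP_{f_k})$, the problem collapses to minimizing $a/\tau^u+b/\tau^w+c/\tau^f$ over positive scalars with $\tau^u+\tau^w+\tau^f\leq1$. Since the objective strictly decreases in each variable, the budget constraint is active, and the Cauchy--Schwarz inequality
\[
\Bigl(\tfrac{a}{\tau^u}+\tfrac{b}{\tau^w}+\tfrac{c}{\tau^f}\Bigr)(\tau^u+\tau^w+\tau^f)\ \geq\ (\sqrt{a}+\sqrt{b}+\sqrt{c})^2
\]
holds with equality iff $\tau^u\propto\sqrt{a}$, $\tau^w\propto\sqrt{b}$, $\tau^f\propto\sqrt{c}$; normalizing so that the three sum to $1$ yields exactly (\ref{Eqpre_26})--(\ref{Eqpre_28}).

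The main obstacle is the boundary behaviour of the Schur complement: the active constraint $\tau^u+\tau^w+\tau^f=1$ forces the $(1,1)$ block of $\Xi$ to zero, so $\Xi$ is only positive semidefinite and the ordinary Schur complement no longer applies verbatim. I would resolve this by noting that once $\vd=\vzero$ the first column of $\Phi_{k+1|k}$ vanishes, so the degenerate block is harmless: the range/compatibility condition of the generalized (Moore--Penrose) Schur complement is met automatically and the reduced bound (\ref{Eqpre_24}) stays valid, which can equally be justified by letting $\tau^u+\tau^w+\tau^f\uparrow1$ and taking limits. A secondary point to verify is that minimization over $\mP_{k+1|k}^c$ and the center genuinely decouples from minimization over the scalars $\tau^u,\tau^w,\tau^f$, so that tightening $\mP$ and zeroing $\vd$ first, and then optimizing the scalars, incurs no loss of optimality.
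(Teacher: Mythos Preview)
Your argument is correct and follows the same overall strategy as the paper: reduce the LMI~(\ref{Eqpre_21}) to an explicit lower bound on $\mP_{k+1|k}^c$, show that the trace-optimal center and shape saturate that bound, and then solve the residual scalar problem. The differences are in packaging. The paper partitions (\ref{Eqpre_21}) so that $\mZ=f_k(\hat{\vx}_k^c)+\ve_{f_k}-\hat{\vx}_{k+1|k}^c$ appears as a free off-diagonal block and invokes the decoupling Lemma~\ref{lem_3}, which delivers both the feasibility condition $\tau^u+\tau^w+\tau^f\leq1$ and the optimal $\mZ=\mB\mX_{22}^{+}\mX_{12}^T=0$ in one stroke via pseudoinverses; this sidesteps the boundary issue you identified because the degenerate $(1,1)$ block $\mX_{11}$ is separated from $\mX_{22}$ before any inversion. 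You instead Schur-complement the full $\Xi$ directly, argue tightness and $\vd=\vzero$ by hand, and then patch the boundary $\tau^u+\tau^w+\tau^f=1$ with a generalized Schur complement or a limit. For the final scalar minimization the paper simply appeals to the Lagrange dual, while you give the sharper Cauchy--Schwarz argument, which is more transparent and immediately certifies global optimality. Your route is more elementary and self-contained; the paper's route is slightly cleaner at the boundary because Lemma~\ref{lem_3} absorbs the pseudoinverse bookkeeping once and for all.
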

\textbf{Proof:} See Appendix.
\begin{remark}
% We have obtained the $\emph{closed}$ solution of the optimization predictive ellipsoid in Corollary \ref{cor_1}, if we choose the special size measure, i.e. the trace function.
When the state equation is linear, there is no the remainder constraint of the nonlinear state equation, i.e., $\mP_{f_k}=0$, it is easy to observe that the optimum ellipsoid derived by  Theorem \ref{cor_1}  coincides with the classical Schweppe bounding ellipsoid \cite{Schweppe68}. %\textcolor[rgb]{1.00,0.00,0.00}{This indicates that such set-membership filter in \cite{Schweppe68} can be regarded as a special case of our proposed filter.}
\end{remark}

\subsection{Fusion update step}\label{sec_3-2}
%This section considers to achieve the centralized fusion bounding ellipsoid $\mathcal {E}_{k+1}^c$ at time $k+1$, which can be also formulated as an SDP problem based on $\mathcal {S}$-procedure and Schur complement. Moreover, we can derive a explicit expression of $\mathcal{E}_{k+1}^c$ based on the decoupling method in fusion center, which is similar in spirit to the classical Schweppe ellipsoidal update equation \cite{Schweppe68}. The main results are summarized to Theorems \ref{thm_2} and Corollary \ref{cor_2}. The proofs are given in Appendix.

In the fusion update step,
%based on the measurements  $\vy_{k+1}^i, i=1,\ldots,L$, the state prediction, the remainder and the noise constraints, $\hat{\vx}_{k+1|k}\in\mathcal{E}_{k+1|k}^c, ~ \Delta h_k^i(\vu_k)\in  \mathcal {E}_{h_k^i}$ and $\vv_k^i\in\mV_k^i$, $i=1,\ldots,L$ at time $k$,
the state bounding ellipsoid  at time $k+1$ can be derived as follows.
\begin{lemma}\label{thm_2}
At time $k+1$, based on the measurements $\vy_{k+1}^i, i=1,\ldots,L$, the predicted bounding ellipsoid $ \mathcal{E}_{k+1|k}^c$ and the remainder bounding ellipsoids $\mathcal {E}_{h_{k+1}^i}$, $i=1,\ldots,L$, and the noise bounding ellipsoids $\mV_k^i$, $i=1,\ldots,L$,
the centralized state bounding ellipsoid $ \mathcal
{E}_{k+1}^c=\{\vx:(\vx-\hat{\vx}_{k+1}^c)^T(\mP_{k+1}^c)^{-1}(\vx-\hat{\vx}_{k+1}^c)\leq1\}$
can be obtained by solving the optimization problem in the variables
$\mP_{k+1}^c$, $\hat{\vx}_{k+1}^c$,  nonnegative scalars $\tau^u\geq0,
\tau_i^v\geq0,\tau_i^h\geq0$, $i=1,\ldots,L$,
\begin{eqnarray}
\label{Eqpre_49} &&\min~~ f(\mP_{k+1}^c) \\[5mm]
\label{Eqpre_50} &&~~\mbox{subject to}~~ -\tau^u\leq0,~-\tau_i^v\leq0,~ -\tau_i^h\leq0,\\[5mm]
\label{Eqpre_51} && -\mP_{k+1}^c\prec0,\\[5mm]
%\label{Eqpre_132}&&\Bigg[\begin{array}{c}
%-\mP_{k+1}\\[3mm]
%(\Phi_{k+1}(\hat{\vx}_{k+1})(\Psi_{k+1}(\vy_{k+1}))_{\bot})^T\end{array}\\
%\nonumber&& \qquad\qquad~~\begin{array}{c}
%\Phi_{k+1}(\hat{\vx}_{k+1})(\Psi_{k+1}(\vy_{k+1}))_{\bot}\\[3mm]
%-(\Psi_{k+1}(\vy_{k+1}))_{\bot}^T\Xi(\Psi_{k+1}(\vy_{k+1}))_{\bot}\end{array} \Bigg]\preceq0,
\label{Eqpre_52}&&\left[\begin{array}{cc}
    -\mP_{k+1}^c&\Phi_{k+1}^c(\hat{\vx}_{k+1}^c)(\Psi_{k+1}^c)_{\bot}\\[3mm]
    (\Phi_{k+1}^c(\hat{\vx}_{k+1}^c)(\Psi_{k+1}^c)_{\bot})^T& ~~-(\Psi_{k+1}^c)_{\bot}^T\Xi(\Psi_{k+1}^c)_{\bot}\\
\end{array}\right]\preceq0,
\end{eqnarray}
where
%\begin{eqnarray}
%\label{Eqpre_133}  \Phi_{k+1}(\hat{\vx}_{k+1})&=&[\hat{\vx}_{k+1|k}-\hat{\vx}_{k+1},\\
% \nonumber&&\qquad ~\mE_{k+1|k}, ~0, ~0], ~~0\in\mathcal {R}^{n,n_1}, \\[3mm]
%\label{Eqpre_134} \Psi_{k+1}(\vy_{k+1})&=& [h_{k+1}(\hat{\vx}_{k+1|k})+\ve_{h_{k+1}}-\vy_{k+1},\\
%\nonumber &&\qquad\qquad~\mJ_{h_{k+1|k}}\mE_{k+1|k}, ~\mI, ~\mB_{h_{k+1}}].
%\end{eqnarray}
\begin{eqnarray}
\label{Eqpre_53}  \Phi_{k+1}^c(\hat{\vx}_{k+1}^c)&=&[\hat{\vx}_{k+1|k}^c-\hat{\vx}_{k+1}^c,
 \mE_{k+1|k}^c, \vdots\underbrace{0,\ldots,0}_{L~blocks},\vdots\underbrace{0,\ldots,0}_{L~blocks}], ~~0\in\mathcal {R}^{n,m}, \\[3mm]
\label{Eqpre_54} \Psi_{k+1}^c(\vy_{k+1}^i)&=& [h_{k+1}^i(\hat{\vx}_{k+1|k}^c)+\ve_{h_{k+1}^i}-\vy_{k+1}^i,
\mJ_{h_{k+1|k}^i}\mE_{k+1|k}^c, \\
\nonumber &&~~~~\vdots\underbrace{0,\ldots,\mI,\ldots0}_{the~i-th~ block~ is~ \mI },\vdots
\underbrace{0,\ldots, \mB_{h_{k+1}^i},\ldots,0}_{the~i-th~ block~ is~ \mB_{h_{k+1}^i} }],\\
\label{Eqpre_55} \Psi_{k+1}^c&=&[(\Psi_{k+1}^c(\vy_{k+1}^1))^T,\ldots,(\Psi_{k+1}^c(\vy_{k+1}^L))^T]^T,
\end{eqnarray}
\begin{eqnarray}
\label{Eqpre_56} \Xi =\diag(1-\tau^u-\sum_{i=1}^L\tau_i^v-\sum_{i=1}^L\tau_i^h,\tau^u\mI,
\underbrace{\tau_1^v\mR_{k+1}^{1^{-1}},\ldots,\tau_L^v\mR_{k+1}^{L^{-1}}}_{L~blocks},\underbrace{\tau_1^hI,\ldots,\tau_L^hI}_{L~blocks}).
\end{eqnarray}
$\mE_{k+1|k}^c$ is the Cholesky factorization of $\mP_{k+1|k}^c$, i.e, $\mP_{k+1|k}^c=\mE_{k+1|k}^c(\mE_{k+1|k}^c)^T$, $\hat{\vx}_{k+1|k}^c$ is the center of the predicted bounding ellipsoid $ \mathcal{E}_{k+1|k}^c$, $\ve_{h_{k+1}}$ and $\mB_{h_{k+1}}$ are denoted by $(\ref{Eqpre_8})$ at the time step $k+1$, and $\mJ_{h_{k+1|k}^i}=
 \frac{\partial h_{k+1}^i(\vx_{k+1})}{\partial \vx_{k+1}}|_{\hat{\vx}_{k+1|k}^c}$, $i=1,\ldots,L$,
  are Jacobian matrices.
\end{lemma}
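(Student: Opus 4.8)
The plan is to mirror the derivation of Lemma~\ref{thm_1}, but now to account for the equality constraints imposed by the $L$ measurement equations, which is precisely what forces the appearance of the nullspace factor $(\Psi_{k+1}^c)_\bot$. First I would parameterize the predicted state as $\vx_{k+1}=\hat{\vx}_{k+1|k}^c+\mE_{k+1|k}^c\vu_{k+1}$ with $\|\vu_{k+1}\|\le 1$, substitute the linearization (\ref{Eqpre_4}) together with the remainder decomposition (\ref{Eqpre_8}) into the measurement model (\ref{Eqpre_2}), and collect all uncertainties into a single augmented vector $\zeta=[1,\vu_{k+1}^\Transpose,(\vv_{k+1}^1)^\Transpose,\ldots,(\vv_{k+1}^L)^\Transpose,\Delta_{h_{k+1}^1}^\Transpose,\ldots,\Delta_{h_{k+1}^L}^\Transpose]^\Transpose$ whose first entry is pinned to $1$. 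With this ordering, $\vx_{k+1}-\hat{\vx}_{k+1}^c=\Phi_{k+1}^c(\hat{\vx}_{k+1}^c)\zeta$ by (\ref{Eqpre_53}), each measurement equation becomes the affine relation $\Psi_{k+1}^c(\vy_{k+1}^i)\zeta=0$ by (\ref{Eqpre_54})--(\ref{Eqpre_55}), and the three families of ellipsoidal bounds ($\|\vu_{k+1}\|^2\le1$, $(\vv_{k+1}^i)^\Transpose(\mR_{k+1}^i)^{-1}\vv_{k+1}^i\le1$, $\|\Delta_{h_{k+1}^i}\|^2\le1$) become quadratic forms $\zeta^\Transpose T_j\zeta\le0$.

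The containment requirement is that $\zeta^\Transpose(\Phi_{k+1}^c)^\Transpose(\mP_{k+1}^c)^{-1}\Phi_{k+1}^c\zeta\le1$ should hold for every $\zeta$ with leading entry $1$ that satisfies all the quadratic bounds \emph{and} the equalities $\Psi_{k+1}^c\zeta=0$. I would apply the $\mathcal{S}$-procedure to the quadratic bounds with nonnegative multipliers $\tau^u,\tau_i^v,\tau_i^h$; collecting the ``$-1$'' contribution of each normalized constraint into the $(1,1)$ position reproduces exactly the diagonal matrix $\Xi$ of (\ref{Eqpre_56}), with leading entry $1-\tau^u-\sum_i\tau_i^v-\sum_i\tau_i^h$, and reduces the requirement to $(\Phi_{k+1}^c)^\Transpose(\mP_{k+1}^c)^{-1}\Phi_{k+1}^c\preceq\Xi$ --- but only on the subspace $\NullSpace(\Psi_{k+1}^c)$.

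To dispose of the equality constraints I would use the decoupling step: writing $\zeta=(\Psi_{k+1}^c)_\bot\,\eta$, where the columns of $(\Psi_{k+1}^c)_\bot$ span $\NullSpace(\Psi_{k+1}^c)$, the condition becomes $(\Psi_{k+1}^c)_\bot^\Transpose\big[(\Phi_{k+1}^c)^\Transpose(\mP_{k+1}^c)^{-1}\Phi_{k+1}^c-\Xi\big](\Psi_{k+1}^c)_\bot\preceq0$. Since $\mP_{k+1}^c\succ0$, a Schur complement taken with respect to $\mP_{k+1}^c$ makes $\mP_{k+1}^c$ appear linearly, yielding the block matrix $\big[\begin{smallmatrix}\mP_{k+1}^c & \Phi_{k+1}^c(\Psi_{k+1}^c)_\bot\\ \ast & (\Psi_{k+1}^c)_\bot^\Transpose\Xi(\Psi_{k+1}^c)_\bot\end{smallmatrix}\big]\succeq0$; this is exactly the LMI (\ref{Eqpre_52}) up to the sign convention (the congruence by $\diag(\mI,-\mI)$ flips both diagonal blocks and reverses the inequality). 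Together with $\mP_{k+1}^c\succ0$ in (\ref{Eqpre_51}), the sign constraints (\ref{Eqpre_50}), and minimizing the size $f(\mP_{k+1}^c)$, this produces the stated SDP.

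The main obstacle, and the step that genuinely differs from the prediction case of Lemma~\ref{thm_1}, is the correct treatment of the measurement equalities: I must verify that restricting the $\mathcal{S}$-procedure matrix inequality to $\NullSpace(\Psi_{k+1}^c)$ through the factor $(\Psi_{k+1}^c)_\bot$ is legitimate --- i.e.\ that robust feasibility over the affine variety $\{\Psi_{k+1}^c\zeta=0\}$ is equivalent to the projected inequality --- and that the $\mathcal{S}$-procedure relaxation introduces no loss beyond what is intended for the quadratic constraints. A secondary bookkeeping point is confirming that the block pattern of $\Phi_{k+1}^c$ and $\Psi_{k+1}^c$, with their placement of $\mI$ and $\mB_{h_{k+1}^i}$ in the $i$th slot, is consistent with the ordering chosen for $\zeta$, so that $\Phi_{k+1}^c\zeta$ and $\Psi_{k+1}^c\zeta$ reproduce the state deviation and the $L$ measurement residuals exactly.
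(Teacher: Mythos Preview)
Your proposal is correct and follows essentially the same route as the paper's proof: parameterize $\vx_{k+1}$ via the prediction ellipsoid, build the augmented vector $\xi$ (your $\zeta$), express the state error and measurement equations as $\Phi_{k+1}^c\xi$ and $\Psi_{k+1}^c\xi=0$, apply the $\mathcal{S}$-procedure to the quadratic bounds, restrict to $\NullSpace(\Psi_{k+1}^c)$ via $(\Psi_{k+1}^c)_\bot$, and finish with a Schur complement. The only cosmetic difference is that the paper first introduces free multipliers $\tau_i^y$ for the equality constraints (a Finsler-type step) and then observes that the resulting inequality is equivalent to the one projected by $(\Psi_{k+1}^c)_\bot$, whereas you go directly to the nullspace parameterization; both arrive at the same projected matrix inequality (\ref{Eqpre_77}) before the Schur step.
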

\textbf{Proof:} See Appendix.

Moreover, in order to reduce computation complexity, we can derive an explicit expression of $\mathcal
{E}_{k+1}^c$. % by using a decoupling technique. %in \cite{Calafiore-ElGhaoui04}
In Lemma \ref{thm_2}, note that a suitable form of the orthogonal complement of $\Psi_{k+1}^c$ can be chosen as follows
\begin{eqnarray}
\nonumber(\Psi_{k+1}^c)_{\bot}=\left[
                        \begin{array}{cc}
                          -1 & 0 \\
                          \Psi_{21} & \Psi_{22}\\
                        \end{array}
                      \right],
\end{eqnarray}
where
\begin{eqnarray}
\label{Eqpre_122}\Psi_{21}&=&[0,~ (h_{k+1}^1(\hat{\vx}_{k+1|k}^c)-\vy_{k+1}^1)^T,
\ldots,(h_{k+1}^L(\hat{\vx}_{k+1|k}^c)-\vy_{k+1}^L)^T,\\
\nonumber &&\qquad\qquad\qquad\qquad\qquad(\mB_{h_{k+1}^1}^{-1}\ve_{h_{k+1}^1})^T,\ldots,(\mB_{h_{k+1}^L}^{-1}\ve_{h_{k+1}^L})^T]^T,\\
\label{Eqpre_118}\Psi_{22}&=&\left[
            \begin{array}{ccccc}
              (\mE_{k+1|k}^c)^{-1} & 0 & 0 & \cdots & 0 \\
              -\mJ_{h_{k+1|k}^1} & \mI & 0 & \vdots & 0 \\
              -\mJ_{h_{k+1|k}^2} & 0 & \mI & \vdots & 0 \\
              \vdots & \vdots & \vdots & \vdots & \vdots \\
              -\mJ_{h_{k+1|k}^L} & 0 & 0 & \vdots & \mI \\
              0 & -\mB_{h_{k+1}^1}^{-1} & 0 & \vdots & 0 \\
              0 & 0 & -\mB_{h_{k+1}^2}^{-1} & \vdots & 0 \\
              \vdots & \vdots & \vdots & \vdots & \vdots \\
              0 & 0 & 0 & \cdots & -\mB_{h_{k+1}^L}^{-1} \\
            \end{array}
          \right].
\end{eqnarray}
If we denote
\begin{eqnarray}
\nonumber\Xi&=&\diag(\Xi_{11},\Xi_{22}),\\
\label{Eqpre_119}\Xi_{11}&=&1-\tau^u-\sum_{i=1}^L\tau_i^v-\sum_{i=1}^L\tau_i^h,\\
\label{Eqpre_120}\Xi_{22}&=&\diag(\tau^u\mI,\underbrace{\tau_1^v\mR_{k+1}^{1^{-1}},
\ldots,\tau_L^v\mR_{k+1}^{L^{-1}}}_{L~blocks},\underbrace{\tau_1^hI,\ldots,\tau_L^hI}_{L~blocks}),
\end{eqnarray}
%$\Xi=\diag(\Xi_{11},\Xi_{22}),\Xi_{11}=1-\tau^u-\sum_{i=1}^L\tau_i^v-\sum_{i=1}^L\tau_i^h,$
%$\Xi_{22} =\diag(\tau^u\mI,
%\tau_1^v\mR_{k+1}^{1^{-1}},\ldots,\tau_L^v\mR_{k+1}^{L^{-1}},\tau_1^hI,\ldots,\tau_L^hI)$,
then Equation (\ref{Eqpre_52}) is equivalent to the following form by reordering of the blocks
\begin{eqnarray}
\label{Eqpre_80}&&\left[
  \begin{array}{ccc}
    \mP_{k+1}^c &\hat{\vx}_{k+1}^c-\hat{\vx}_{k+1|k}^c & \mB \\
    (\hat{\vx}_{k+1}^c-\hat{\vx}_{k+1|k}^c)^T & \Xi_{11}+\Psi_{21}^T\Xi_{22}\Psi_{21} & \Psi_{21}^T\Xi_{22}\Psi_{22} \\
    \mB^T & \Psi_{22}^T\Xi_{22}\Psi_{21} & \Psi_{22}^T\Xi_{22}\Psi_{22} \\
  \end{array}
\right]\succeq0,\\
&&\label{Eqpre_121}\mB=[\mI ~\underbrace{0,\ldots, 0}_{L~ blocks}],
\end{eqnarray}
where $\mI$ and 0 have compatible dimensions. Moreover, the decoupled fusion update step is given in the following theorem.

\begin{theorem}\label{cor_2}
Consider the optimization problem in the variables $\tau^u,\tau_i^v,\tau_i^h,i=1,\ldots,L$
\begin{eqnarray}
\label{Eqpre_57} &&\min~~ f(\mB(\Psi_{22}^T\Xi_{22}\Psi_{22})^{-1}\mB^T) \\[5mm]
 \label{Eqpre_58}&&~~\mbox{subject to}~~ -\tau^u\leq0,~-\tau_i^v\leq0,~ -\tau_i^h\leq0,\\[5mm]
 \label{Eqpre_59}&&~~~\left[
        \begin{array}{cc}
          \Xi_{11}+\Psi_{21}^T\Xi_{22}\Psi_{21} & \Psi_{21}^T\Xi_{22}\Psi_{22} \\
          \Psi_{22}^T\Xi_{22}\Psi_{21} & \Psi_{22}^T\Xi_{22}\Psi_{22} \\
        \end{array}
      \right]\succeq0,
%\label{Eqpre_117} &&~~~(\mI-(\Psi_{22}^T\Xi_{22}\Psi_{22})^{+}\Psi_{22}^T\Xi_{22}\Psi_{22})\mB^T=0.
\end{eqnarray}
where $\Psi_{21}$, $\Psi_{22}$, $ \Xi_{11}$, $ \Xi_{22}$, $\mB$ are denoted by (\ref{Eqpre_122}), (\ref{Eqpre_118}), (\ref{Eqpre_119}), (\ref{Eqpre_120}), (\ref{Eqpre_121}), respectively.
If the above problem is feasible, then there exists an optimal ellipsoid. The shape matrix and center of the optimal fusion update ellipsoid $\mathcal{E}_{k+1}^c$  are given by
%\begin{eqnarray}
%\label{Eqpre_60}\mP_{k+1}^c&=&\mB(\Psi_{22}^T\Xi_{22_{opt}}\Psi_{22})^{-1}\mB^T,\\
%\label{Eqpre_61}\hat{\vx}_{k+1}^c&=&\hat{\vx}_{k+1|k}^c+\mB(\Psi_{22}^T\Xi_{22_{opt}}\Psi_{22})^{-1}(\Psi_{22}^T\Xi_{22_{opt}}\Psi_{21}),
%\end{eqnarray}
%where $\tau_{opt}^u,\tau_{opt_i}^v,\tau_{opt_i}^h$ are the optimal solutions of the decision variables $\tau^u,\tau_i^v,\tau_i^h,i=1,\ldots,L$, in the optimization problem (\ref{Eqpre_57})--(\ref{Eqpre_59}).
%Moreover, (\ref{Eqpre_60}) can also be rewritten to
\begin{eqnarray}
\label{Eqpre_62}\mP_{k+1}^{c^{-1}}&=&\tau_{opt}^u\mP_{k+1|k}^{c^{-1}}+\sum_{i=1}^L\mJ_{h_{k+1|k}^i}^T\left(\frac{\mR_{k+1}^i}{\tau_{opt_i}^v}+\frac{\mP_{h_{k+1}^i}}{\tau_{opt_i}^h}\right)^{-1}
\mJ_{h_{k+1|k}^i}\\
%\nonumber &&=\mP_{k+1|k}^{c^{-1}}+\sum_{i=1}^L\left(\mJ_{h_{k+1|k}^i}^T\left(\frac{\mR_{k+1}^i}{\tau_{opt_i}^v}+\frac{\mP_{h_{k+1}^i}}{\tau_{opt_i}^h}\right)^{-1}\mJ_{h_{k+1|k}^i}-\tau_{opt_i}^v\mP_{k+1|k}^{c^{-1}}
%-\tau_{opt_i}^h\mP_{k+1|k}^{c^{-1}}\right)
%\\
\label{Eqpre_180}\hat{\vx}_{k+1}^c&=&\hat{\vx}_{k+1|k}^c+\sum_{i=1}^L\tau_{opt_i}^v\mK_{k+1}^i(\vy_{k+1}^i-h_{k+1}^i(\hat{\vx}_{k+1|k}^c))-\mC_{k+1},%\mM_1\mM_2\mP_{k+1}^c\mM_3
\end{eqnarray}
where
\begin{eqnarray}
\mK_{k+1}^i&=&\mP_{k+1}^c\mJ_{h_{k+1|k}^i}^T\mR_{k+1}^{i^{-1}}-\mM_1\mM_2\mJ_{h_{k+1|k}^i}^T\mR_{k+1}^{i^{-1}}(\tau_{opt_i}^v\mR_{k+1}^{i^{-1}}+\tau_{opt_i}^h\mP_{h_{k+1}^{i}}^{-1})^{-1}\tau_{opt_i}^v\mR_{k+1}^{i^{-1}}\\
\nonumber \mC_{k+1}&=& \mM_1\mM_2\left(\sum_{i=1}^L\tau_{opt_i}^v\mJ_{h_{k+1|k}^i}^T\mR_{k+1}^{i^{-1}}(\tau_{opt_i}^v\mR_{k+1}^{i^{-1}}+\tau_{opt_i}^h\mP_{h_{k+1}^{i}}^{-1})^{-1}
\tau_{opt_i}^h\mP_{h_{k+1}^i}^{-1}\ve_{h_{k+1}^i}\right)
\end{eqnarray}
\begin{eqnarray}
\nonumber \mM_1&=&\left(\tau_{opt}^u\mP_{k+1|k}^{c^{-1}}+\sum_{i=1}^L\tau_{opt_i}^v\mJ_{h_{k+1|k}^i}^T\mR_{k+1}^{i^{-1}}\mJ_{h_{k+1|k}^i}\right)^{-1}\\
\nonumber \mM_2&=&\mI+\sum_{i=1}^L\tau_{opt_i}^v\mJ_{h_{k+1|k}^i}^T\mR_{k+1}^{i^{-1}}(\tau_{opt_i}^v\mR_{k+1}^{i^{-1}}+\tau_{opt_i}^h\mP_{h_{k+1}^{i}}^{-1})^{-1}\tau_{opt_i}^v\mR_{k+1}^{i^{-1}}\mJ_{h_{k+1|k}^i}\mP_{k+1}^c,
\end{eqnarray}
%$\ve_{h_{k+1}^i}$ and $\mP_{h_{k+1}^i}$ are denoted by (\ref{Eqpre_8}), $\mJ_{h_{k+1|k}^i}=\frac{\partial h_{k+1}^i(\vx_{k+1})}{\partial \vx_{k+1}}|_{\hat{\vx}_{k+1|k}^c}$ $i=1,\ldots,L$ are Jacobian matrices, and $\mI$ is an identity matrix with a compatible dimension.
$\mJ_{h_{k+1|k}^i}=\frac{\partial h_{k+1}^i(\vx_{k+1})}{\partial \vx_{k+1}}|_{\hat{\vx}_{k+1|k}^c}$ $i=1,\ldots,L$ are the Jacobian matrices of the nonlinear measurement function $h_{k+1}^i$ denoted by (\ref{Eqpre_4}), $\ve_{h_{k+1}^i}$ and $\mP_{h_{k+1}^i}$  are the center and shape matrix of the bounding ellipsoid of the remainder denoted by (\ref{Eqpre_8}), respectively, and $\tau_{opt_i}^v,\tau_{opt_i}^h$ are the optimal solutions of the decision variables $\tau_i^v,\tau_i^h,i=1,\ldots,L$, in the optimization problem (\ref{Eqpre_57})--(\ref{Eqpre_59}).
\end{theorem}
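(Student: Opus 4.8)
The plan is to \emph{decouple} the scalar multipliers $\tau^u,\tau_i^v,\tau_i^h$ from the shape matrix $\mP_{k+1}^c$ and the centre $\hat{\vx}_{k+1}^c$, in the same spirit as the prediction step. First I would freeze the multipliers and read the equivalent inequality (\ref{Eqpre_80}) as an LMI in $(\mP_{k+1}^c,\hat{\vx}_{k+1}^c)$ alone. Grouping its last two block-rows and block-columns into the single symmetric block $S$ that appears in (\ref{Eqpre_59}), and viewing the remaining border as $[\,\va,\ \mB\,]$ with $\va\EqDef\hat{\vx}_{k+1}^c-\hat{\vx}_{k+1|k}^c$, the Schur complement of (\ref{Eqpre_80}) with respect to $S$ shows that (\ref{Eqpre_80}) is solvable exactly when $S\succeq0$ — precisely the retained constraint (\ref{Eqpre_59}) — together with
\[
\mP_{k+1}^c\ \succeq\ [\,\va,\ \mB\,]\,S^{-1}\,[\,\va,\ \mB\,]^{T}.
\]
This single step quarantines the multipliers inside $S$.

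Next I would minimise $f(\mP_{k+1}^c)$. Partitioning $S^{-1}$ conformally with $S$ into a scalar $p>0$, a vector $q$ and a matrix $R$, the right-hand side above equals $p\,\va\va^{T}+\va q^{T}\mB^{T}+\mB q\,\va^{T}+\mB R\mB^{T}$. Completing the square gives $[\,\va,\ \mB\,]S^{-1}[\,\va,\ \mB\,]^{T}=\mB(R-p^{-1}qq^{T})\mB^{T}+p(\va-\va^\star)(\va-\va^\star)^{T}$ with $\va^\star=-p^{-1}\mB q$, so $\va=\va^\star$ is \Lowner-minimal and hence optimal for every monotone objective $f$ (trace or the weighted trace (\ref{Eqpre_170})), while $\mP_{k+1}^c$ may simultaneously be taken equal to this minimal value. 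The partitioned-inverse identity $R-p^{-1}qq^{T}=(\Psi_{22}^T\Xi_{22}\Psi_{22})^{-1}$ collapses that value to $\mB(\Psi_{22}^T\Xi_{22}\Psi_{22})^{-1}\mB^{T}$, which is exactly the objective (\ref{Eqpre_57}); the residual minimisation over the multipliers is then the stated problem (\ref{Eqpre_57})--(\ref{Eqpre_59}), and unwinding $\va^\star=\mB(\Psi_{22}^T\Xi_{22}\Psi_{22})^{-1}\Psi_{22}^T\Xi_{22}\Psi_{21}$ supplies the optimal centre. This establishes existence of the optimal ellipsoid; it remains to put $\mP_{k+1}^c$ and $\hat{\vx}_{k+1}^c$ into closed form.

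The substantive work — and the main obstacle — is the explicit block algebra. Substituting $\Psi_{22}$ from (\ref{Eqpre_118}) and $\Xi_{22}$ from (\ref{Eqpre_120}) and multiplying out, the arrowhead/block-diagonal sparsity of $\Psi_{22}$ makes $\Psi_{22}^T\Xi_{22}\Psi_{22}$ a bordered matrix whose leading block is $\tau^u\mP_{k+1|k}^{c^{-1}}+\sum_{i}\tau_i^v\mJ_{h_{k+1|k}^i}^T\mR_{k+1}^{i^{-1}}\mJ_{h_{k+1|k}^i}$, whose $i$-th diagonal block is $\tau_i^v\mR_{k+1}^{i^{-1}}+\tau_i^h\mP_{h_{k+1}^i}^{-1}$, whose leading off-diagonal borders are $-\tau_i^v\mJ_{h_{k+1|k}^i}^T\mR_{k+1}^{i^{-1}}$, and whose cross blocks vanish. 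Since $\mB=[\mI,0,\dots,0]$ extracts the top-left block of the inverse, one further Schur complement against the block-diagonal lower-right part yields $\mP_{k+1}^{c^{-1}}$ as the leading block minus $\sum_i\tau_i^v\mJ_{h_{k+1|k}^i}^T\mR_{k+1}^{i^{-1}}(\tau_i^v\mR_{k+1}^{i^{-1}}+\tau_i^h\mP_{h_{k+1}^i}^{-1})^{-1}\tau_i^v\mR_{k+1}^{i^{-1}}\mJ_{h_{k+1|k}^i}$; the parallel-sum identity $X-X(X+Y)^{-1}X=(X^{-1}+Y^{-1})^{-1}$ with $X=\tau_i^v\mR_{k+1}^{i^{-1}}$ and $Y=\tau_i^h\mP_{h_{k+1}^i}^{-1}$ contracts each summand to $\mJ_{h_{k+1|k}^i}^T(\mR_{k+1}^i/\tau_i^v+\mP_{h_{k+1}^i}/\tau_i^h)^{-1}\mJ_{h_{k+1|k}^i}$, which is exactly (\ref{Eqpre_62}).

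For the centre I would substitute $\Psi_{21}$ from (\ref{Eqpre_122}) into $\va^\star$ and evaluate $\Psi_{22}^T\Xi_{22}\Psi_{21}$ block-wise, obtaining a vector whose leading block is $\sum_i\tau_i^v\mJ_{h_{k+1|k}^i}^T\mR_{k+1}^{i^{-1}}(\vy_{k+1}^i-h_{k+1}^i(\hat{\vx}_{k+1|k}^c))$ and whose $i$-th block is $\tau_i^v\mR_{k+1}^{i^{-1}}(h_{k+1}^i(\hat{\vx}_{k+1|k}^c)-\vy_{k+1}^i)-\tau_i^h\mP_{h_{k+1}^i}^{-1}\ve_{h_{k+1}^i}$. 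Extracting the top block of the inverse by the same partitioned-inverse reduction — with $\mM_1$ the inverse of the leading block before elimination and $\mM_2$ the induced correction factor — separates the innovation-driven term $\sum_i\tau_i^v\mK_{k+1}^i(\vy_{k+1}^i-h_{k+1}^i(\hat{\vx}_{k+1|k}^c))$ from the remainder-centre term $\mC_{k+1}$, giving (\ref{Eqpre_180}). The conceptual content is just the single Schur-complement decoupling together with the two algebraic identities; the difficulty, and where sign and indexing errors are easiest to make, is in bookkeeping the $L$-fold block structure during the two inversions, especially for the centre.
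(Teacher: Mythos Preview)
Your proposal is correct and follows essentially the same route as the paper: the paper applies the decoupling Lemma~\ref{lem_3} (from \cite{ElGhaoui-Calafiore01}) directly to (\ref{Eqpre_80}) to obtain $\mP_{k+1}^c=\mB(\Psi_{22}^T\Xi_{22}\Psi_{22})^{-1}\mB^T$ and $\hat{\vx}_{k+1}^c-\hat{\vx}_{k+1|k}^c=\mB(\Psi_{22}^T\Xi_{22}\Psi_{22})^{-1}\Psi_{22}^T\Xi_{22}\Psi_{21}$, whereas you re-derive that lemma inline via Schur complement and completing the square; the subsequent block algebra (bordered structure of $\Psi_{22}^T\Xi_{22}\Psi_{22}$, extraction of the top-left block of its inverse, and the parallel-sum contraction) is identical. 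One small point the paper makes explicit that you gloss over: it argues that if any multiplier vanishes the feasible set in (\ref{Eqpre_76}) shrinks, so the optimal $\tau$'s are strictly positive and $\Psi_{22}^T\Xi_{22}\Psi_{22}$ is genuinely invertible rather than merely pseudo-invertible---you should state this to justify writing $S^{-1}$ and $(\Psi_{22}^T\Xi_{22}\Psi_{22})^{-1}$.
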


\textbf{Proof:} See Appendix.

\begin{remark}
Here, we call the equations (\ref{Eqpre_62})--(\ref{Eqpre_180}) \emph{centralized set-membership information filter}, which has following characters:
\begin{itemize}
  \item Similar to the information filter \cite{BarShalom-Li-Kirubarajan01}, $\mJ_{h_{k+1|k}^i}^T\left(\frac{\mR_{k+1}^i}{\tau_{opt_i}^v}+\frac{\mP_{h_{k+1}^i}}{\tau_{opt_i}^h}\right)^{-1}
  \mJ_{h_{k+1|k}^i}$ and $\mK_{k+1}^i$  in (\ref{Eqpre_62})--(\ref{Eqpre_180}) can be taken as the update information matrix and the gain matrix provided by the $i$-th sensor for the estimator, respectively. $\tau_{opt_i}^v, i=1,\ldots,L$ are the fusion weights.

  \item $\mC_{k+1}$ is the nonlinear correction term of the state update estimation, which relies on the nonlinear measurement functions $h_{k+1}^i$, $i=1,\ldots,L$.
  %$\mM_1,\mM_2,\mM_3$ may be caused by the high-order remainders. $\mJ_{h_{k+1|k}^i}$ and $\vy_{k+1}^i$ are the information from the local sensors, $i=1,\ldots,L$.

   \item When the measurement equations are linear,  there are no the remainder constraints, i.e., $\mP_{h_{k+1}^i}=0$, it is easy to observe that the optimum ellipsoid derived by the Theorem \ref{cor_2}  also similar to the classical Schweppe bounding ellipsoid \cite{Schweppe68}.
\end{itemize}

\end{remark}
\begin{remark}
If $f(\mP)=tr(\mP)$ and $\Psi_{k+1}^c$ is full-rank, then the optimization problem (\ref{Eqpre_49})-(\ref{Eqpre_52}) in Lemma \ref{thm_2} is an SDP problem, the dimension of the constraint matrix (\ref{Eqpre_52}) is $\mM=n+(n+2mL+1-mL)=mL+2n+1$ and the number of decision variables is $\mN=\frac{n(n+1)}{2}+n+2L+1$, where $n, m$ and $L$ are the dimensions of the state, the measurement and the number of sensors, respectively. Moreover, if we use a general-purpose primal-dual interior-point algorithm to solve it, then the computation complexity of the problem is $\emph{O}(\mM^2\mN^2)$, see \cite{Vandenberghe-Boyd96}. Therefore, in our case, the computation complexity is $\emph{O}(n^6)$ if $n> mL$, otherwise, it is $\emph{O}(m^2L^4)$.

As described in \cite{Nesterov-Nemirovski94}, we can use a path-following interior-point method to solve (\ref{Eqpre_57})-(\ref{Eqpre_59}) in Theorem \ref{cor_2}. A tedious but straightforward computation shows the practical complexity can be assumed to be $\emph{O}(n^3L+m^3L^4)$, which implies an $\emph{O}(n^3)$ dependence on the size of the state $\vx$, and $\emph{O}(m^3L^4)$ dependence on the number of the sensor. Therefore, for the $\emph{fixed}$ number of sensors, the complexity of the decoupled problem (\ref{Eqpre_57}) improves upon that of the coupled one (\ref{Eqpre_49}) by a factor of $\emph{O}(n^3)$.
\end{remark}

The centralized set membership information fusion algorithm can be summarized as follows.
\begin{algorithm}[Centralized set membership information fusion algorithm]\label{alg_1}
~\\
\begin{itemize}
\item Step~1: (Initialization step) Set $k=0$ and  initial values $(\hat{\vx}_0,\mP_0)$ such that $\vx_0\in\mathcal {E}_0$.
\item Step~2: (Bounding step) Take samples $\vu_k^1,\ldots,\vu_k^N$ from the sphere $||\vu_k||\leq 1$, and then determine two bounding ellipsoids to cover the remainders  $\Delta f_k$ by (\ref{Eqpre_5})-(\ref{Eqpre_6}).
\item Step~3: (Prediction step \cite{Wang-Shen-Zhu-Pan16}) Optimize the center and shape matrix of the state prediction ellipsoid $(\hat{\vx}_{k+1|k}^c,\mP_{k+1|k}^c)$ such that $\vx_{k+1|k}^c\in\mathcal {E}_{k+1|k}^c$ by (\ref{Eqpre_18})-(\ref{Eqpre_21}) or (\ref{Eqpre_24})-(\ref{Eqpre_25}).
\item Step~4: (Bounding step \cite{Wang-Shen-Zhu-Pan16}) Take samples $\vu_{k+1|k}^1,\ldots,\vu_{k+1|k}^N$ from the sphere $||\vu_{k+1|k}||\leq 1$, and then determine one bounding ellipsoid to cover the remainder  $\Delta h_{k+1|k}^i$, $i=1,\ldots,L$, by  (\ref{Eqpre_7})-(\ref{Eqpre_8}).
 \item Step~5: (Fusion update step) Optimize the center and shape matrix of the state estimation ellipsoid $(\hat{\vx}_{k+1}^c,\mP_{k+1}^c)$ such that $\vx_{k+1}^c\in\mathcal {E}_{k+1}^c$ by solving the optimization problem (\ref{Eqpre_49})-(\ref{Eqpre_52}) or (\ref{Eqpre_57})-(\ref{Eqpre_59}).
% \item Step~6: (Error bound) The estimation error bound of the $i$th entry of the state $\vx_{k+1}$ can be computed by projecting the ellipsoid along the $i$th output direction, $i=1,\ldots,n$.
 \item Step~6: Set $k=k+1$ and go to step~2.
\end{itemize}
\end{algorithm}

\section{Distributed Fusion}\label{sec_4}
In this section, in order to reduce the computation burden of the fusion center and improve
the reliability, robustness, and survivability of the fusion system \cite{Li-Zhu-Wang-Han03}, the distributed set-membership estimation fusion method is derived by fusing the state bounding ellipsoids, which are sent from the local sensors and using the character of the nonlinear state function. Since the state prediction step of the distributed fusion is completely same as that of the centralized fusion, we only discuss the fusion update step of the distributed fusion. In addition, the distributed set-membership information fusion formula can also be achieved by the decoupling technique. The main results are summarized to Lemma \ref{thm_3} and Theorem \ref{cor_3}. The proofs are also given in Appendix.

% we derive the bounding ellipsoid of the distributed set-membership estimation fusion, which can be also formulated as an SDP problem. Comparing with centralized fusion, each sensor uses its own information to create a local estimated ellipsoid of the state, then the central processor combines the local estimated ellipsoids to produce the global optimal state estimation bounding ellipsoid.

%\subsection{Prediction Step}
%For  distribution fusion (\textcolor[rgb]{1.00,0.00,0.00}{the distributed fusion}), we also use the state bounding ellipsoid, the remainder bounding ellipsoid and noise bounding ellipsoid to obtain the state prediction  ellipsoid $\mathcal {E}_{k+1|k}^d$ at time $k+1$, which does not need any sensor information, then the centralized fusion and distributed fusion have the same computational procedure in prediction step, except replacing the center $\vx_{k}^c$ and shape matrix $\mP_k^c$ by $\vx_{k}^d$ and $\mP_k^d$ for distributed fusion in (\ref{Eqpre_18})-(\ref{Eqpre_21}) or (\ref{Eqpre_24})-(\ref{Eqpre_28}), respectively.

%\subsection{Fusion update step}
%In this step, we need to propagate the state estimation ellipsoids of all sensors to fusion center, and use the state prediction ellipsoid of the fusion center to obtain the final state bounding ellipsoid $\mathcal {E}_{k+1}^d$ for distributed fusion.
\begin{lemma}\label{thm_3}
At time $k+1$, based on the prediction bounding ellipsoids $\mathcal {E}_{k+1|k}^d$ and the estimation bounding ellipsoids of single sensors $\mathcal {E}_{k+1}^i$, $i=1,\ldots,L$, the distributed state bounding ellipsoid $ \mathcal
{E}_{k+1}^d=\{\vx:(\vx-\hat{\vx}_{k+1}^d)^T(\mP_{k+1}^d)^{-1}(\vx-\hat{\vx}_{k+1}^d)\leq1\}$
can be obtained by solving the optimization problem in the variables
$\mP_{k+1}^d$, $\hat{\vx}_{k+1}^d$,  nonnegative scalars $\tau^u\geq0,
\tau_i^y\geq0$, $i=1,\ldots,L$,
\begin{eqnarray}
\label{Eqpre_81} &&\min~~ f(\mP_{k+1}^d) \\[5mm]
\label{Eqpre_82}&&~~\mbox{subject to}~~ -\tau^u\leq0,~-\tau_i^y\leq0\\[5mm]
\label{Eqpre_83} && -\mP_{k+1}^d\prec0,\\[5mm]
%\label{Eqpre_104}&&\Bigg[\begin{array}{c}
%-\mP_{k+1|k}\\[3mm]
%(\Phi_{k+1|k}(\hat{\vx}_{k+1|k})(\Psi_{k+1|k}(\vy_{k}))_{\bot})^T\end{array}\\
%\nonumber&& \qquad\qquad~~\begin{array}{c}
%\Phi_{k+1|k}(\hat{\vx}_{k+1|k})(\Psi_{k+1|k}(\vy_{k}))_{\bot}\\[3mm]
%-(\Psi_{k+1|k}(\vy_{k}))_{\bot}^T\Xi(\Psi_{k+1|k}(\vy_{k}))_{\bot}\end{array} \Bigg]\preceq0,
\label{Eqpre_84}&&\left[\begin{array}{cc}
    -\mP_{k+1}^d&\Phi_{k+1}^d\\[3mm]
    (\Phi_{k+1}^d)^T& ~~-\Xi-\Pi\\
\end{array}\right]\preceq0,
\end{eqnarray}
where
%\begin{eqnarray}
%\nonumber \Phi_{k+1|k}(\hat{\vx}_{k+1|k})&=&[f_k(\hat{\vx}_{k})+\ve_{f_k}-\hat{\vx}_{k+1|k},~\mJ_{f_k}\mE_{k},\\
% \label{Eqpre_105} &&\qquad\mI, ~0, ~\mB_{f_k}, ~0], ~~0\in\mathcal {R}^{n,n_1},\\[3mm]
%  \nonumber\Psi_{k+1|k}(\vy_{k})&=& [h_{k}(\hat{\vx}_{k})+\ve_{h_{k}}-\vy_{k}, ~\mJ_{h_{k}}\mE_k, ~0,\\
% \label{Eqpre_106}&&\qquad\qquad\qquad\qquad~\mI, ~0, ~\mB_{h_{k}}].
%\end{eqnarray}
\begin{eqnarray}
 \label{Eqpre_85} \Phi_{k+1}^d&=&[\hat{\vx}_{k+1|k}^d-\hat{\vx}_{k+1}^d,\mE_{k+1|k}^d],\\[3mm]
 \label{Eqpre_86} \Phi_{k+1}^i&=&[\hat{\vx}_{k+1|k}^d-\hat{\vx}_{k+1}^i,\mE_{k+1|k}^d], \\[3mm]
 \label{Eqpre_87}\Pi&=& \sum_{i=1}^L\tau_i^y(\Phi_{k+1}^i)^T(\mP_{k+1}^i)^{-1}\Phi_{k+1}^i,\\
\label{Eqpre_88}\Xi &=&\diag(1-\tau^u-\sum_{i=1}^L\tau_i^y,\tau^uI),
\end{eqnarray}
 $\mE_{k+1|k}^d$ is the Cholesky factorization of $\mP_{k+1}^d$, i.e, $\mP_{k+1}^d=\mE_{k+1|k}^d(\mE_{k+1|k}^d)^T$.
\end{lemma}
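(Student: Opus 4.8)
The plan is to mirror the argument behind the centralized Lemma~\ref{thm_2}, using the $\mathcal{S}$-procedure to turn the set-inclusion requirement into a single linear matrix inequality and then the Schur complement to expose the shape matrix $\mP_{k+1}^d$.

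First I would parametrize the predicted ellipsoid. Since $\mP_{k+1|k}^d\succ0$, its Cholesky factor $\mE_{k+1|k}^d$ is invertible, so every candidate state can be written as $\vx_{k+1}=\hat{\vx}_{k+1|k}^d+\mE_{k+1|k}^d\vu$, and condition~I ($\vx_{k+1}\in\mathcal{E}_{k+1|k}^d$) becomes exactly $\vu^T\vu\le1$. With the augmented variable $\vz=[1,\vu^T]^T$ one has $\Phi_{k+1}^d\vz=\vx_{k+1}-\hat{\vx}_{k+1}^d$ and $\Phi_{k+1}^i\vz=\vx_{k+1}-\hat{\vx}_{k+1}^i$ for the $\Phi$'s in (\ref{Eqpre_85})--(\ref{Eqpre_86}). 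Hence membership of $\vx_{k+1}$ in the target ellipsoid $\mathcal{E}_{k+1}^d$, in the predicted ellipsoid, and in each local ellipsoid $\mathcal{E}_{k+1}^i$ all become quadratic inequalities in $\vz$, namely $\vz^T(\Phi_{k+1}^d)^T(\mP_{k+1}^d)^{-1}\Phi_{k+1}^d\vz\le1$, $\vu^T\vu\le1$, and $\vz^T(\Phi_{k+1}^i)^T(\mP_{k+1}^i)^{-1}\Phi_{k+1}^i\vz\le1$ respectively. The requirement defining $\mathcal{E}_{k+1}^d$ is precisely the implication that the target inequality holds whenever the other $L+1$ inequalities do.

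Second I would apply the $\mathcal{S}$-procedure. Introducing nonnegative multipliers $\tau^u$ for condition~I and $\tau_i^y$ for each local ellipsoid, a sufficient condition for the implication is that, for all $\vz$,
$$\vz^T(\Phi_{k+1}^d)^T(\mP_{k+1}^d)^{-1}\Phi_{k+1}^d\vz-1\le\tau^u(\vu^T\vu-1)+\sum_{i=1}^L\tau_i^y\big(\vz^T(\Phi_{k+1}^i)^T(\mP_{k+1}^i)^{-1}\Phi_{k+1}^i\vz-1\big).$$
Homogenizing (replacing each constant $1$ by $t^2$ with $\vz=[t,\vu^T]^T$) turns this into a matrix inequality; collecting the constant and $\vu^T\vu$ terms reproduces exactly the block-diagonal matrix $\Xi$ of (\ref{Eqpre_88}), while the $\tau_i^y$-weighted quadratic terms assemble into $\Pi$ of (\ref{Eqpre_87}). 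The inequality thus reads $(\Phi_{k+1}^d)^T(\mP_{k+1}^d)^{-1}\Phi_{k+1}^d\preceq\Xi+\Pi$. Because the nonnegativity constraints $-\tau^u\le0$, $-\tau_i^y\le0$ are (\ref{Eqpre_82}) and $\mP_{k+1}^d\succ0$ is (\ref{Eqpre_83}), the Schur complement converts this into positive semidefiniteness of $\left[\begin{smallmatrix}\mP_{k+1}^d & \Phi_{k+1}^d\\ (\Phi_{k+1}^d)^T & \Xi+\Pi\end{smallmatrix}\right]$; a congruence by $\diag(\mI,-\mI)$ and an overall sign change render it in the exact form (\ref{Eqpre_84}). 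Minimizing $f(\mP_{k+1}^d)$ over all feasible $(\mP_{k+1}^d,\hat{\vx}_{k+1}^d,\tau^u,\tau_i^y)$ then selects the smallest admissible bounding ellipsoid.

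I expect the main subtlety to be the status of the $\mathcal{S}$-procedure step. For a single quadratic constraint the $\mathcal{S}$-lemma is lossless, but here the hypothesis is an intersection of several ellipsoids, so the multiplier condition is only sufficient for the implication in general. Consequently the LMI (\ref{Eqpre_84}) certifies that $\mathcal{E}_{k+1}^d$ contains the whole intersection---which is all that is required for a valid set-membership bound---without claiming to be the minimum-volume outer ellipsoid of that intersection. I would therefore phrase the conclusion as: any feasible point of (\ref{Eqpre_81})--(\ref{Eqpre_84}) yields an ellipsoid $\mathcal{E}_{k+1}^d$ guaranteed to contain $\vx_{k+1}$ under conditions~I--II, exactly paralleling the centralized case. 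The remaining bookkeeping---checking the dimensions of $\Phi_{k+1}^d$, $\Xi$, and $\Pi$ and the signs in the congruence---is routine.
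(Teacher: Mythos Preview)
Your proposal is correct and follows essentially the same route as the paper's own proof: parametrize $\vx_{k+1}$ via the Cholesky factor of $\mP_{k+1|k}^d$, lift to the augmented vector $\xi=[1,\vu^T]^T$, apply the $\mathcal{S}$-procedure (Lemma~\ref{lem_1}) with multipliers $\tau^u,\tau_i^y$ to obtain $(\Phi_{k+1}^d)^T(\mP_{k+1}^d)^{-1}\Phi_{k+1}^d-\Xi-\Pi\preceq0$, and finish with the Schur complement (Lemma~\ref{lem_2}). Your additional remark that the $\mathcal{S}$-procedure is only sufficient when $L+1>1$ constraints are present is accurate and consistent with how the paper states the result.
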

\textbf{Proof:} See Appendix.

\begin{remark}
Compared with the centralized fusion in Lemma \ref{thm_2}, it can be seen that the dimension of the constraint matrix (\ref{Eqpre_84}) is $\mM=2n+1$ independent of the number of the sensors and the number of decision variables is $\mN=\frac{n(n+1)}{2}+n+L+1$. However, the dimension of the constraint matrix (\ref{Eqpre_52}) is $\mM=mL+2n+1$, and the number of the decision variables is $\mN=\frac{n(n+1)}{2}+n+2L+1$. Therefore, the distributed fusion can decrease much more computation burden of the fusion center.
\end{remark}

Note that  (\ref{Eqpre_84}) can be rewritten to
\begin{eqnarray}
\label{Eqpre_106}\left[\begin{array}{ccc}
    \mP_{k+1}^d&\hat{\vx}_{k+1|k}^d-\hat{\vx}_{k+1}^d&\mE_{k+1|k}^d\\[3mm]
\label{Eqpre_0106}  (\hat{\vx}_{k+1|k}^d-\hat{\vx}_{k+1}^d)^T&\Upsilon_{11} & \Upsilon_{12}\\[3mm]
    (\mE_{k+1|k}^d)^T&\Upsilon_{12}^T & \Upsilon_{22}\\
\end{array}\right]\succeq0,
\end{eqnarray}
where
\begin{eqnarray}
\label{Eqpre_123}\Upsilon_{11}&=&1-\tau^u-\sum_{i=1}^L\tau_i^y+\sum_{i=1}^L\tau_i^y(\hat{\vx}_{k+1|k}^d-\hat{\vx}_{k+1}^i)^T\mP_{k+1}^{i^{-1}}(\hat{\vx}_{k+1|k}^d-\hat{\vx}_{k+1}^i),\\
\label{Eqpre_124}\Upsilon_{12}&=&\sum_{i=1}^L\tau_i^y(\hat{\vx}_{k+1|k}^d-\hat{\vx}_{k+1}^i)^T\mP_{k+1}^{i^{-1}}\mE_{k+1|k}^d,\\
\label{Eqpre_125}\Upsilon_{22}&=&\tau^uI+\sum_{i=1}^L\tau_i^y\mE_{k+1|k}^{d^T}\mP_{k+1}^{i^{-1}}\mE_{k+1|k}^d.
\end{eqnarray}

Moreover, we can derive an analytical formula for the shape matrix and the center of the bounding ellipsoid $\mathcal {E}_{k+1|k}^d$  as follows.

\begin{theorem}\label{cor_3}
Consider the convex optimization problem in the variables $\tau^u, \tau_i^y$, $i=1,\ldots,L$,
\begin{eqnarray}
 \label{Eqpre_107}&&\min~~ f(\mE_{k+1|k}^d\Upsilon_{22}^{-1}\mE_{k+1|k}^{d^T}) \\[5mm]
 \label{Eqpre_108}&&~~\mbox{subject to}~~ -\tau^u\leq0,~-\tau_i^y\leq0\\[5mm]
 \label{Eqpre_109}&&~~~\left[
        \begin{array}{cc}
          \Upsilon_{11} & \Upsilon_{12} \\
          \Upsilon_{12}^T & \Upsilon_{22} \\
        \end{array}
      \right]\succeq0,
\end{eqnarray}
where $\mE_{k+1|k}^d$ is the Cholesky factorization of $\mP_{k+1}^d$, i.e, $\mP_{k+1}^d=\mE_{k+1|k}^d(\mE_{k+1|k}^d)^T$, and $ \Upsilon_{11}$,$ \Upsilon_{12}$, $ \Upsilon_{22}$ are denoted by (\ref{Eqpre_123})-(\ref{Eqpre_125}), respectively.
If the above problem is feasible, then there exists an optimal bounding ellipsoid, and the shape matrix and center of the optimal bounding ellipsoid $\mathcal {E}_{k+1|k}^d$ are given by
\begin{eqnarray}
%\nonumber    (\mP_{k+1}^d)^{-1}&=&(\mE_{k+1|k}^d(\mE_{k+1|k}^d\Upsilon_{22}^{-1}\mE_{k+1|k}^{d^T})^{-1}\mE_{k+1|k}^{d^T})^{-1}\\
%\nonumber&=&\tau_{opt}^u(\mP_{k+1|k}^{d})^{-1}+\sum_{i=1}^L\tau_{opt_i}^y(\mP_{k+1}^{i})^{-1}\\
\label{Eqpre_110}(\mP_{k+1}^d)^{-1} &=&\tau_{opt}^u(\mP_{k+1|k}^{d})^{-1}+\sum_{i=1}^L\tau_{opt_i}^y(\mP_{k+1}^{i})^{-1}\\
%\nonumber&=&(\mP_{k+1|k}^{d})^{-1}+\sum_{i=1}^L\tau_{opt_i}^y((\mP_{k+1|k}^{i})^{-1}-(\mP_{k+1}^{d})^{-1})\\
\label{Eqpre_111}  \hat{\vx}_{k+1}^d&=& \hat{\vx}_{k+1|k}^d+\sum_{i=1}^L\tau_{opt_i}^y\mP_{k+1}^d(\mP_{k+1}^{i})^{-1}(\hat{\vx}_{k+1}^i-\hat{\vx}_{k+1|k}^d),
%\nonumber &=&\sum_{i=1}^L\tau_{opt_i}^y\mP_{k+1}^d(\mP_{k+1}^{i})^{-1}\hat{\vx}_{k+1}^i-\tau_{opt}^u\mP_{k+1}^{d}(\mP_{k+1|k}^{d})^{-1}\hat{\vx}_{k+1|k}^d
\end{eqnarray}
where $\tau_{opt_i}^y$ is the optimal solution of the decision  variable $\tau_i^y, i=1,\ldots,L$, respectively.
\end{theorem}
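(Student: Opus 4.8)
The plan is to reproduce, for the distributed update, the same decoupling mechanism already exploited in the prediction step (Theorem~\ref{cor_1}) and the centralized update (Theorem~\ref{cor_2}). The crucial structural observation is that in the reformulated constraint (\ref{Eqpre_106}) the unknown center $\hat{\vx}_{k+1}^d$ appears \emph{only} in the first block column, through the vector $\hat{\vx}_{k+1|k}^d-\hat{\vx}_{k+1}^d$, whereas the blocks $\Upsilon_{11},\Upsilon_{12},\Upsilon_{22}$ in (\ref{Eqpre_123})--(\ref{Eqpre_125}) depend only on the multipliers $\tau^u,\tau_i^y$, the prediction center $\hat{\vx}_{k+1|k}^d$, and the local estimates $(\hat{\vx}_{k+1}^i,\mP_{k+1}^i)$. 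This lets me split the minimization into an inner minimization over the center at fixed multipliers, followed by the outer minimization over the multipliers, which is exactly the reduced problem (\ref{Eqpre_107})--(\ref{Eqpre_109}).

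Write $\vd:=\hat{\vx}_{k+1|k}^d-\hat{\vx}_{k+1}^d$ and $\mE:=\mE_{k+1|k}^d$. Under the positivity constraint (\ref{Eqpre_109}), which guarantees the lower-right block is invertible, the Schur complement applied to (\ref{Eqpre_106}) gives
\[
\mP_{k+1}^d\ \succeq\ [\,\vd,\ \mE\,]\left[\begin{array}{cc}\Upsilon_{11}&\Upsilon_{12}\\ \Upsilon_{12}^T&\Upsilon_{22}\end{array}\right]^{-1}[\,\vd,\ \mE\,]^T .
\]
Since the objective $f$ is monotone with respect to the \Lowner{} order, the optimum is attained at equality. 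I would then complete the square in $\vd$: denoting the inverse of the $2\times2$ block by $\mU^{-1}$, the only term quadratic in $\vd$ has the positive scalar coefficient $[\mU^{-1}]_{11}$, so the minimizer is $\vd^{\star}=\mE\,\Upsilon_{22}^{-1}\Upsilon_{12}^T$ and the residual shape matrix is $\mP_{k+1}^d=\mE\,\Upsilon_{22}^{-1}\mE^T$. Here I use the standard block-inverse identity $[\mU^{-1}]_{22}-[\mU^{-1}]_{21}[\mU^{-1}]_{11}^{-1}[\mU^{-1}]_{12}=\Upsilon_{22}^{-1}$ to collapse the completed square back to $\Upsilon_{22}^{-1}$. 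This shows that after eliminating the center, the objective reduces precisely to $f(\mE\,\Upsilon_{22}^{-1}\mE^T)$ of (\ref{Eqpre_107}), and the remaining feasibility condition is (\ref{Eqpre_109}).

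Finally, I would extract the analytical formulae. Substituting $\Upsilon_{22}$ from (\ref{Eqpre_125}) and factoring out $\mE$ and $\mE^T$ using $\mP_{k+1|k}^d=\mE\mE^T$ gives
\[
\Upsilon_{22}=\mE^T\Big(\tau^u(\mP_{k+1|k}^d)^{-1}+\sum_{i=1}^L\tau_i^y(\mP_{k+1}^i)^{-1}\Big)\mE ,
\]
so that $\mP_{k+1}^d=\mE\,\Upsilon_{22}^{-1}\mE^T=\big(\tau^u(\mP_{k+1|k}^d)^{-1}+\sum_{i}\tau_i^y(\mP_{k+1}^i)^{-1}\big)^{-1}$, which is (\ref{Eqpre_110}) at the optimal multipliers. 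For the center, substituting $\Upsilon_{12}^T=\sum_i\tau_i^y\mE^T(\mP_{k+1}^i)^{-1}(\hat{\vx}_{k+1|k}^d-\hat{\vx}_{k+1}^i)$ into $\vd^{\star}=\mE\,\Upsilon_{22}^{-1}\Upsilon_{12}^T$ and using $\mE\,\Upsilon_{22}^{-1}\mE^T=\mP_{k+1}^d$ yields $\hat{\vx}_{k+1}^d=\hat{\vx}_{k+1|k}^d-\vd^{\star}$ in the form (\ref{Eqpre_111}). The main obstacle I anticipate is not conceptual but bookkeeping: correctly tracking the partitioned inverse through the completing-the-square step and confirming the Schur identity that makes the residual collapse to $\Upsilon_{22}^{-1}$; the two supporting facts — monotonicity of $f$ (so equality is optimal) and the genuine center-independence of the $\Upsilon$ blocks (so the decoupling is legitimate) — are what must be stated carefully for the argument to be airtight.
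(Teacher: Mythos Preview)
Your proposal is correct and follows essentially the same decoupling route as the paper: the paper's proof of Theorem~\ref{cor_3} simply refers to that of Theorem~\ref{cor_2}, which applies the decoupling Lemma~\ref{lem_3} to the block LMI (\ref{Eqpre_106}) to read off $\mP_{k+1}^d=\mE_{k+1|k}^d\Upsilon_{22}^{-1}(\mE_{k+1|k}^d)^T$ and $\hat{\vx}_{k+1|k}^d-\hat{\vx}_{k+1}^d=\mE_{k+1|k}^d\Upsilon_{22}^{-1}\Upsilon_{12}^T$. Your Schur-complement-plus-completing-the-square argument is exactly the content of that lemma carried out by hand, so the two arguments coincide.
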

\begin{proof}
The proof is similar to Theorem \ref{cor_2}.
\end{proof}
\begin{remark}
We call the equations (\ref{Eqpre_110})--(\ref{Eqpre_111})  \emph{distributed set-membership information filter}. In (\ref{Eqpre_110})--(\ref{Eqpre_111}),
$\tau_{opt_i}^y(\mP_{k+1}^{i})^{-1}$ and $\mP_{k+1}^d(\mP_{k+1}^{i})^{-1}$ can be taken as the update information matrix and the gain matrix provided by the $i$-th sensor for the estimator, respectively, and $\tau_{opt_i}^y, i=1,\ldots,L$ are the fusion weights.
%
%\textcolor[rgb]{0.00,0.07,1.00}{It is worth to note that the fusion formula (\ref{Eqpre_110})--(\ref{Eqpre_111}) is also an expression of the information filter fusion, since the $\sum_{i=1}^L\tau_{opt_i}^y((\mP_{k+1|k}^{i})^{-1}-(\mP_{k+1}^{d})^{-1})$ and $\sum_{i=1}^L\tau_{opt_i}^y\mP_{k+1}^d(\mP_{k+1}^{i})^{-1}$ $(\hat{\vx}_{k+1}^i-\hat{\vx}_{k+1|k}^d)$ can be taken as the update information for the shape matrix and the center of the distributed fusion bounding ellipsoid, respectively.  $\mP_{k+1}^{i}$ and $\hat{\vx}_{k+1}^i$ are the  information from the local sensors, $i=1,\ldots,L$.}
\end{remark}

The distributed set membership information fusion algorithm can be summarized as follows.
\begin{algorithm}[Distributed set membership information fusion algorithm]\label{alg_2}
~\\
\begin{itemize}
\item Step~1: (Initialization step) Set $k=0$ and  initial values $(\hat{\vx}_0,\mP_0)$ such that $\vx_0\in\mathcal {E}_0$.
\item Step~2: (Bounding step \cite{Wang-Shen-Zhu-Pan16}) Take samples $\vu_k^1,\ldots,\vu_k^N$ from the sphere $||\vu_k||\leq 1$, and then determine a bounding ellipsoid to cover the remainders  $\Delta f_k$ by (\ref{Eqpre_5})-(\ref{Eqpre_6}).
\item Step~3: (Prediction step) Optimize the center and shape matrix of the state prediction ellipsoid $(\hat{\vx}_{k+1|k}^d,$ $\mP_{k+1|k}^d)$ such that $\hat{\vx}_{k+1|k}^d\in\mathcal {E}_{k+1|k}^d$ by solving the optimization problem (\ref{Eqpre_18})-(\ref{Eqpre_21}) or (\ref{Eqpre_24})-(\ref{Eqpre_25}).
\item Step~4: (Fusion update step) Optimize the center and shape matrix of the state estimation ellipsoid $(\hat{\vx}_{k+1}^d,\mP_{k+1}^d)$ such that $\vx_{k+1}^d\in\mathcal {E}_{k+1}^d$ by solving the optimization problem  (\ref{Eqpre_81})-(\ref{Eqpre_84}) or (\ref{Eqpre_107})-(\ref{Eqpre_109}) based on the state prediction bounding ellipsoids $\mathcal {E}_{k+1|k}^d$ and bounding ellipsoids of single sensors $\mathcal {E}_{k+1}^i$, $i=1,\ldots,L$.
%  \item Step~4:(Error bound) The estimation error bound of the $i$th entry of the state $\vx_{k+1}$ can be computed by projecting the ellipsoid along the $i$th output direction, $i=1,\ldots,n$.
 \item Step~5: Set $k=k+1$ and go to step~2.
\end{itemize}
\end{algorithm}
%Optimize the center and shape matrix of the state estimation ellipsoid $(\hat{\vx}_{k+1}^c,\mP_{k+1}^c)$ such that $\vx_{k+1}^c\in\mathcal {E}_{k+1}^c$ by solving the optimization problem \textcolor[rgb]{1.00,0.00,0.00}{ (\ref{Eqpre_49})-(\ref{Eqpre_52}) or (\ref{Eqpre_57})-(\ref{Eqpre_59}).}

\begin{remark}
In target tracking, whether it is the distributed fusion or the centralized fusion, if the measurement only contain range and angle, the boundary sampling method \cite{Wang-Shen-Zhu-Pan16} can be used to drive the bounding ellipsoid of the remainders with less computation complexity. Therefore, the bounding steps of Algorithm \ref{alg_1} and Algorithm \ref{alg_2} can be computed efficiently. Finally, the set-membership information fusion formulae are summarized in Table \ref{tab_1}.
\end{remark}

\begin{table}
 \caption{Set-Membership Information Fusion Formulae}
{\tiny \begin{tabular}{|c|c|c|}
  \hline
  % after \\: \hline or \cline{col1-col2} \cline{col3-col4} ...
  Fusion method & Centralized set membership information fusion algorithm& Distributed set membership information fusion algorithm\\\hline
  Nonlinear  &  $\vx_{k+1}=f_k(\vx_k)+\vw_k$\qquad\qquad\qquad\qquad & $\vx_{k+1}=f_k(\vx_k)+\vw_k$\\
  model&$\vy_k^i=h_k^i(\vx_k)+\vv_k^i,~i=1,\ldots, L $& $\hat{\vx}_{k+1}^i=\vx_{k+1}-\hat{\vx}_{k+1|k}^d-\mE_{k+1|k}^d\vu_{k+1|k}+\hat{\vx}_{k+1}^i$\\\hline
  Noise bounds &  $\mW_k=\{\vw_k: \vw_k^T\mQ_k^{-1}\vw_k\leq1\}$ & $\mW_k=\{\vw_k: \vw_k^T\mQ_k^{-1}\vw_k\leq1\}$\\
  ~&$\mV_k^i=\{\vv_k^i: {\vv_k^i}^T({\mR_k^i})^{-1}\vv_k^i\leq1\}$&$\mathcal {E}=\{\vu_{k+1|k}: \parallel\vu_{k+1|k}\parallel\leq1\}$\\\hline
  Remainder bounds &  Methods in \cite{Scholte-Campbell03} or \cite{Wang-Shen-Zhu-Pan16}& Methods in \cite{Scholte-Campbell03} or \cite{Wang-Shen-Zhu-Pan16} \\\hline
  Data received &  $\vy_k^1,\ldots,\vy_k^L$ & $\hat{\vx}_{k}^1,\ldots,\hat{\vx}_{k}^L$\\\hline
  Optimum weights &  SDP (\ref{Eqpre_57})-(\ref{Eqpre_59})& SDP (\ref{Eqpre_107})-(\ref{Eqpre_109})\\\hline
~  & $\mP_{k+1}^{c^{-1}}=\tau_{opt}^u\mP_{k+1|k}^{c^{-1}}\qquad\qquad\qquad\qquad\qquad\qquad\qquad\qquad  $(\ref{Eqpre_62})
 &  ~\\
  Information&$+\sum_{i=1}^L\mJ_{h_{k+1|k}^i}^T\left(\frac{\mR_{k+1}^i}{\tau_{opt_i}^v}+\frac{\mP_{h_{k+1}^i}}{\tau_{opt_i}^h}\right)^{-1}\mJ_{h_{k+1|k}^i}$&
$(\mP_{k+1}^d)^{-1}=\tau_{opt}^u(\mP_{k+1|k}^{d})^{-1}+\sum_{i=1}^L\tau_{opt_i}^y(\mP_{k+1}^{i})^{-1}$ \qquad (\ref{Eqpre_110})\\
  filter fuser& $\hat{\vx}_{k+1}^c=\hat{\vx}_{k+1|k}^c\qquad\qquad\qquad\qquad \qquad\qquad\qquad\qquad \qquad\qquad$ (\ref{Eqpre_180})&   $\hat{\vx}_{k+1}^d=\hat{\vx}_{k+1|k}^d\qquad\qquad\qquad\qquad \qquad\qquad\qquad\qquad \qquad\qquad\qquad\qquad$ (\ref{Eqpre_111}) \\
  ~&$+\sum_{i=1}^L\tau_{opt_i}^v\mK_{k+1}^i(\vy_{k+1}^i-h_{k+1}^i(\hat{\vx}_{k+1|k}^c))-\mC_{k+1}$&
  $+\sum_{i=1}^L\tau_{opt_i}^y\mP_{k+1}^d(\mP_{k+1}^{i})^{-1}(\hat{\vx}_{k+1}^i-\hat{\vx}_{k+1|k}^d)$\\
  \hline
  \end{tabular}}\label{tab_1}
\end{table}
\begin{remark}
As far as multi-algorithm fusion for nonlinear dynamic systems is concerned, the multiple bounding ellipsoids can be constructed to minimize the size of the state bounding ellipsoid by complementary advantages of multiple parallel algorithms. Specifically, one can use multiple parallel Algorithm \ref{alg_1} or \ref{alg_2}  with differently weighted objectives in (\ref{Eqpre_170}), where the larger $\omega_j$ emphasizes the $j$th entry of the estimated state vector, then the intersection of these bounding ellipsoids can achieve a tighter bounding ellipsoid that containing the true state in fusion center.
\end{remark}
\section{Numerical examples }\label{sec_5}
In this section, we provide an example to compare the performance of the centralized fusion with that of the distributed fusion. Moreover, we also use the multi-algorithm fusion to further reduce the estimation error bound based on the different weighted objective (\ref{Eqpre_170}).

Consider a common tracking system with bounding noise and there are two sensors track a same target in different position. The state contain position and velocity of $x$ and $y$ directions. Here, the dynamic system equations is as follows \cite{BarShalom-Li-Kirubarajan01}:
\begin{eqnarray}%
\label{Eqpre_114}\vx_{k+1}&=&\left[
                         \begin{array}{cccc}
                           1 & 0& T& 0\\
                           0 & 1 & 0 & T \\
                           0 & 0 & 1 & 1 \\
                           0 & 0 & 0 & 1 \\
                         \end{array}
                       \right]\vx_k+\vw_k,\\[3mm]
\label{Eqpre_115}\vy_k^i&=&\left[
                      \begin{array}{c}
                        \sqrt{(\vx_k(1)-\vz_k^i(1))^2+(\vx_k(2)-\vz_k^i(2))^2} \\[3mm]
                        arctan\left(\frac{\vx_k(2)-\vz_k^i(2)}{\vx_k(1)-\vz_k^i(1)}\right) \\
                      \end{array}
                    \right]+\vv_k^i,\\
                    \nonumber && ~for~ i=1,2.
\end{eqnarray}
where  $T$ is the time sampling interval with $T=1$. $\vz_k^i=[\vz_k^i(1)~\vz_k^i(2)]^T$ is the position of the $i$th sensor,
%which is in a uncertain circle
%\begin{eqnarray}
%\nonumber \Theta_{\vz_k^i}=\{\vz_k^i: \vz_k^i=\bar{\vz}_k^i +\Delta\vz_k, \parallel\Delta\vz_k\parallel\leq\mR_{\vz_k^i}\}
%\end{eqnarray}
where $\vz_k^1=[525~525]^T$ and $\vz_k^2=[524~524]^T$. Moreover, the process noise $\vw_k$ and measurement noise $\vv_k$ are taking value in specified ellipsoidal sets
\begin{eqnarray}
\nonumber\mW_k&=&\{\vw_k: \vw_k^T\mQ_k^{-1}\vw_k\leq1\}\\
\nonumber\mV_k^i&=&\{\vv_k^i: \vv_k^T{\mR_k^i}^{-1}\vv_k\leq1\}.
\end{eqnarray}
where
\begin{eqnarray}
\nonumber \mQ_k&=&\sigma^2\left[
                          \begin{array}{cccc}
                            \frac{T^3}{3} & 0 & \frac{T^2}{2} & 0 \\
                            0 & \frac{T^3}{3} & 0 & \frac{T^2}{2} \\
                            \frac{T^2}{2} & 0 & T & 0 \\
                            0 & \frac{T^2}{2} & 0 & T \\
                          \end{array}
                        \right]\\
 \nonumber \mR_k^i&=&\left[
                   \begin{array}{cc}
                     0.01 & 0\\
                     0 & 25\\
                   \end{array}
                 \right].
\end{eqnarray}
The target acceleration is $\sigma^2=1$.
In the example, the target starts at the point $(120,120)$ with a velocity of $(6,6)$.

The center and the shape matrix of the initial bounding ellipsoid are $\hat{\vx}_0=\left[
 \begin{array}{cccc}
    120& 120 & 6 &6\\
 \end{array}
\right]^T$,
\begin{eqnarray}
\nonumber \mP_0=\left[
                  \begin{array}{cccc}
                    100 & 0 & 0 & 0 \\
                    0 & 100 & 0 & 0 \\
                    0 & 0 & 30 & 0 \\
                    0 & 0 & 0 & 30 \\
                  \end{array}
                \right],
\end{eqnarray}
respectively.

In order to simulate the performance of the center fusion and distributed fusion, we assume the process noise measurement noise are truncated Gaussian with zeros mean and covariance $\frac{\mQ_k}{9}$ and $\frac{\mR_k^i}{9}$ on the ellipsoidal sets, respectively.
From the description of the above, we can use sensor 1 (SMF1), sensor 2 (SMF2), the centralized fusion (CSMF) and distributed fusion (DSMF) to calculate the error bound with $w=[\frac{1}{4} ~\frac{1}{4} ~\frac{1}{4}~ \frac{1}{4}]$ in (\ref{Eqpre_170}), respectively, moreover, we also use the multi-algorithm fusion (MSMF) to produce the error bound based on the different weight coefficient with  $w_1=[\frac{19}{25} ~\frac{2}{25} ~\frac{2}{25}~ \frac{2}{25}]$,  $w_2=[\frac{2}{25} ~\frac{19}{25} ~\frac{2}{25}~ \frac{2}{25}]$, $w_3=[\frac{2}{25} ~\frac{2}{25} ~\frac{19}{25}~ \frac{2}{25}]$,  $w_4=[\frac{2}{25} ~\frac{2}{25} ~\frac{2}{25}~ \frac{19}{25}]$, where the \textbf{error bound} of the $i$th entry of the state $\vx_{k+1}$ can be calculated by projecting the ellipsoid along the $i$th output direction.
% In addition, we use the differently weighted objectives (\ref{Eqpre_170}) to construct multiple different estimation ellipsoids by centralized fusion algorithm, the intersection of these ellipsoids produces a finial error bound (Multiple), in our example, we choose $w1=[0.76 ~0.08 ~0.08~ 0.08]$,  $w2=[0.08 ~0.76 ~0.08~ 0.08]$ to.

The following simulation results are under Matlab R2012a with YALMIP.

Figs. \ref{fig_01}-\ref{fig_04} present a comparison of the error bounds along position and velocity direction for sensors 1, 2 using Algorithm \ref{alg_1} (L=1) and for the fusion center using the centralized fusion Algorithm \ref{alg_1} (L=2) and the distributed fusion Algorithm \ref{alg_2} (L=2) and the multi-algorithm fusion, respectively.

From Figs. \ref{fig_01}-\ref{fig_04}, we can observe the following phenomenon:
\begin{itemize}
  \item The performance of the centralized fusion and the distributed fusion is better than that of sensors.
  \item The performance of the centralized fusion is better than that of the distributed fusion along $x$ and $y$ position direction in Figs. \ref{fig_01}-\ref{fig_02}, but the distributed fusion performs slightly better than centralized fusion along $x$ and $y$ velocity direction in Figs. \ref{fig_03}-\ref{fig_04}. The reasons may be that the optimal bounding ellipsoid cannot be obtained for the nonlinear dynamic system, and the error bound of the state vector is calculated by minimizing trace of the shape matrix of the bounding state ellipsoid rather than minimizing the error bounds along position and velocity directions, respectively.
  \item The performance of the multi-algorithm fusion is significantly better than that of the other methods along position and velocity direction. Since it extract the useful information of each entry of the state vector by the differently weighted objectives. Then the intersection fusion of these estimation ellipsoids can sufficiently take advantage of the information of each sensor, which yields a tighter state bounding ellipsoidal.
\end{itemize}

\begin{figure}[h]
%\hrule %\vspace*{8cm}
\vbox to 6cm{\vfill \hbox to \hsize{\hfill
\scalebox{0.6}[0.59]{\includegraphics{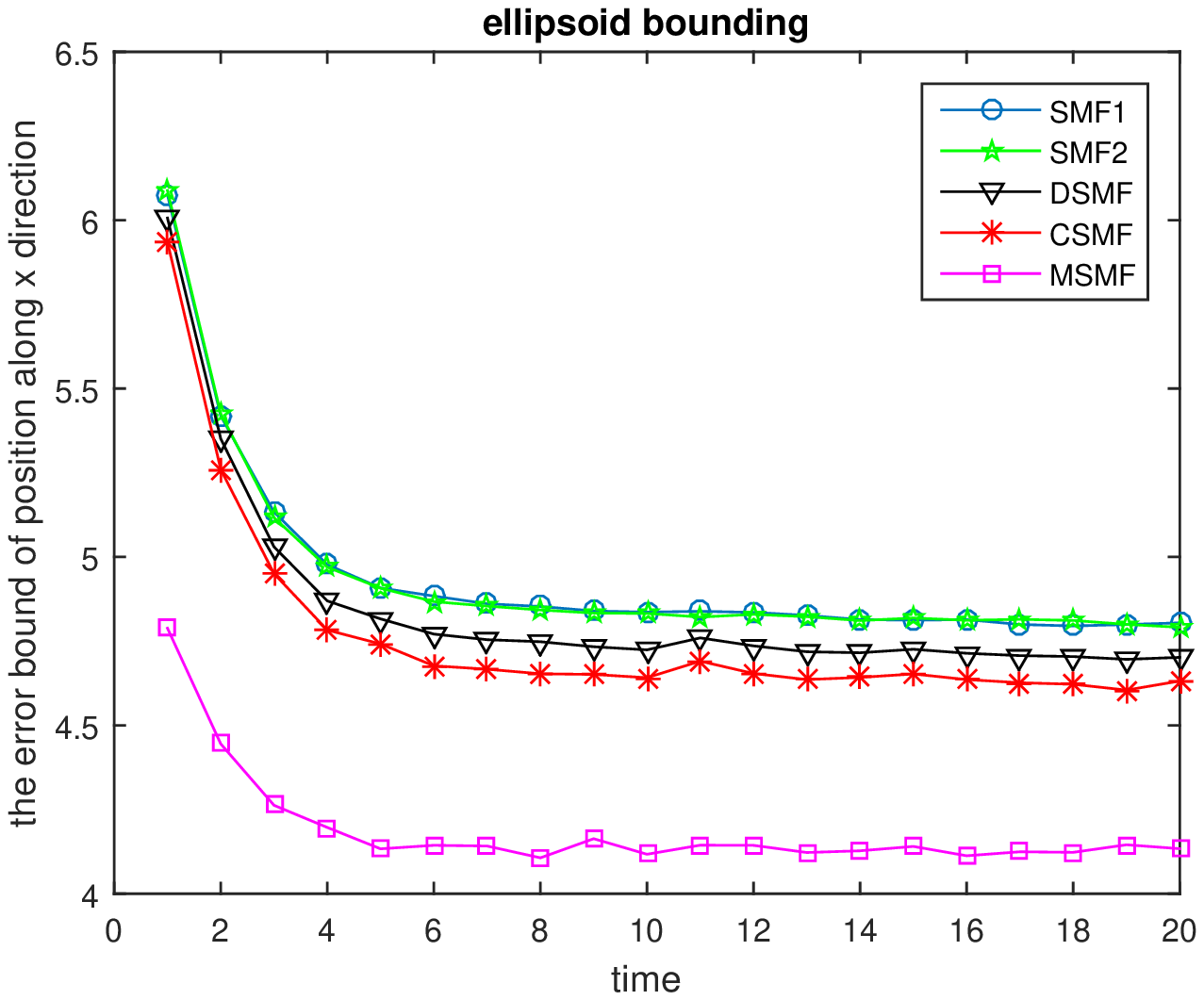}} \hfill}\vfill}
%\hrule
\caption{Comparison of the error bounds of position along $\vx$ direction based on 100
Monte Carlo runs.}\label{fig_01}
\end{figure}

\begin{figure}[h]
%\hrule %\vspace*{8cm}
\vbox to 6cm{\vfill \hbox to \hsize{\hfill
\scalebox{0.6}[0.59]{\includegraphics{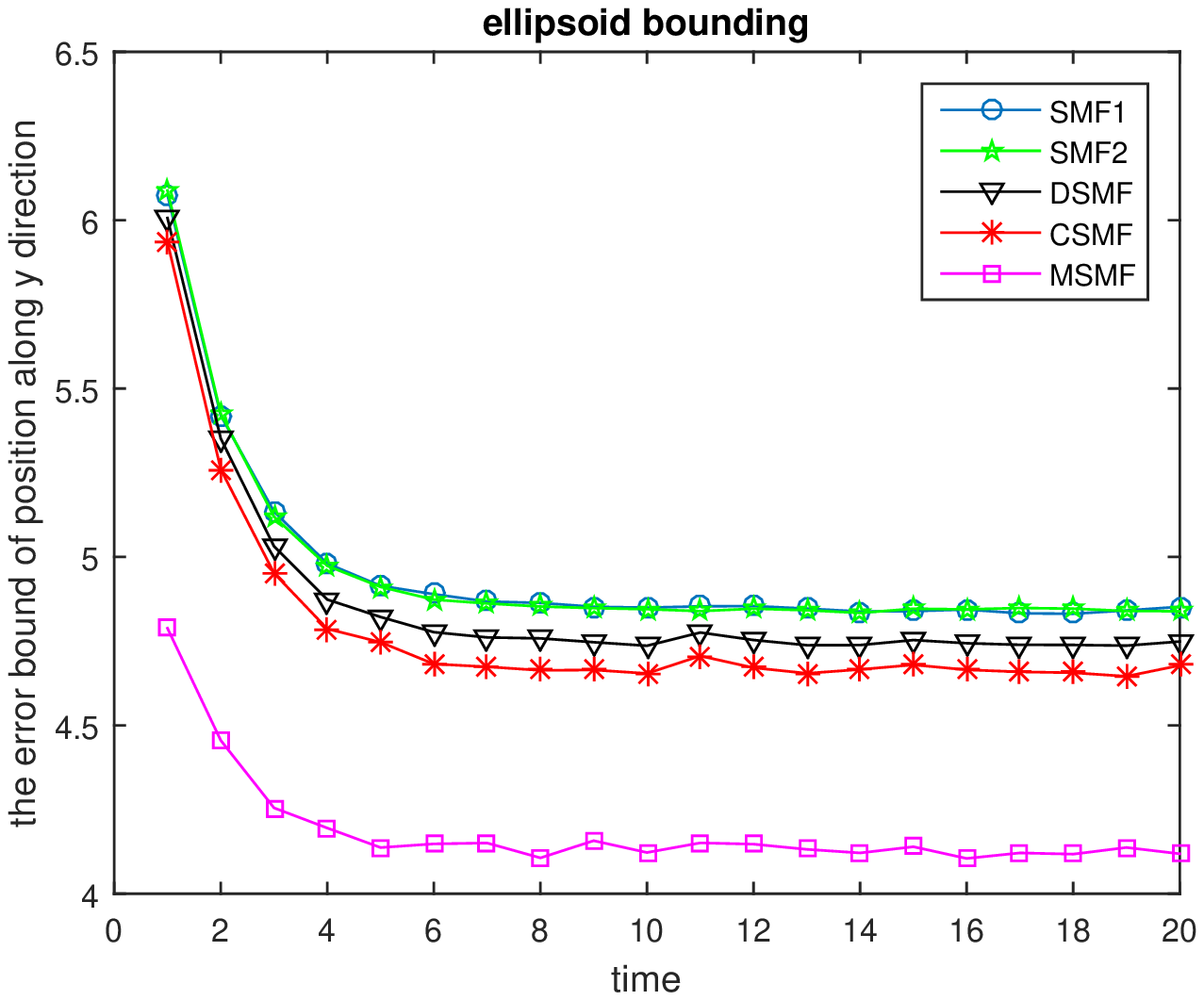}} \hfill}\vfill}
%\hrule
\caption{Comparison of the error bounds of position along $\vy$ direction based on 100
Monte Carlo runs.}\label{fig_02}
\end{figure}

\begin{figure}[h]
%\hrule %\vspace*{8cm}
\vbox to 6cm{\vfill \hbox to \hsize{\hfill
\scalebox{0.6}[0.59]{\includegraphics{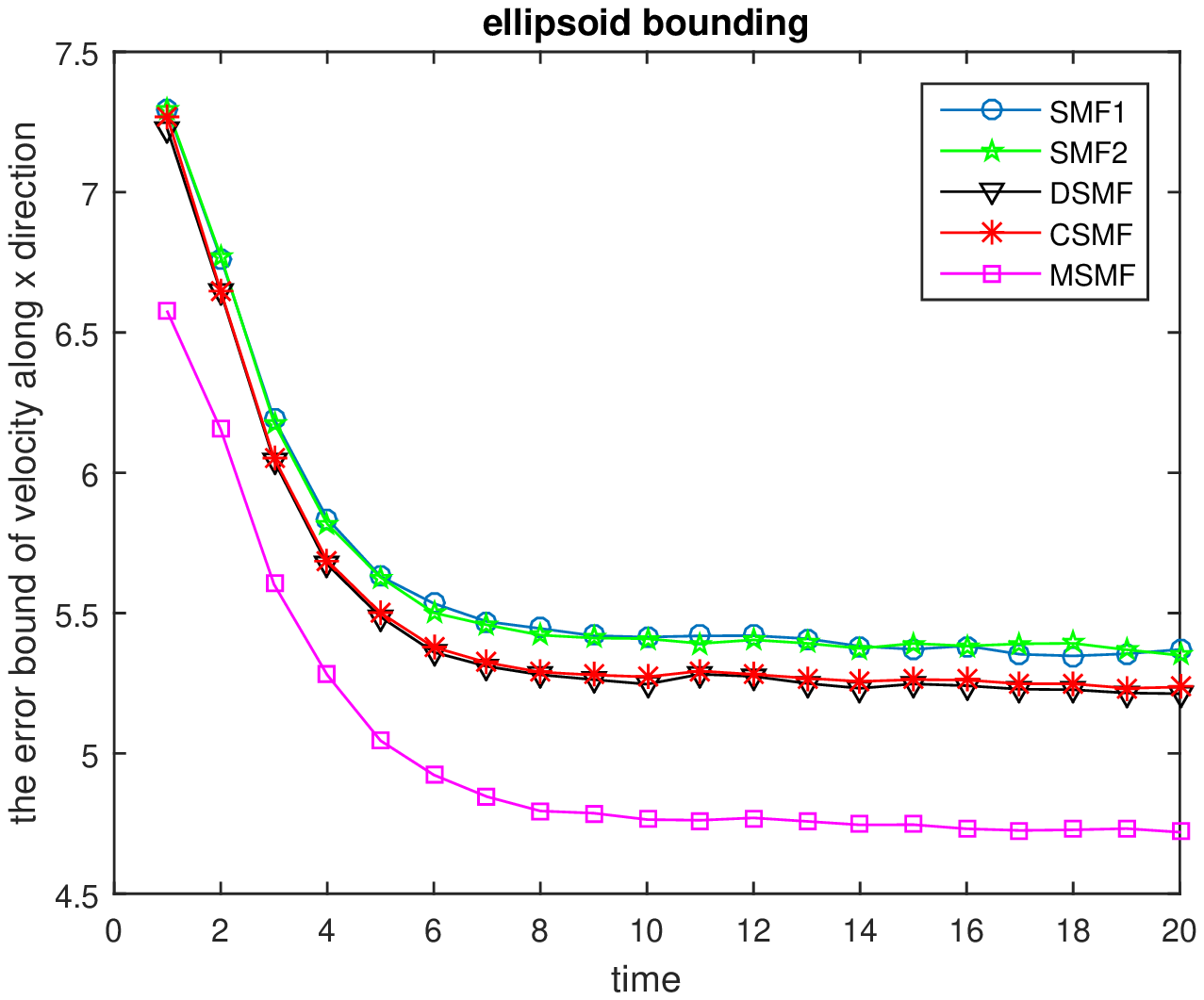}} \hfill}\vfill}
%\hrule
\caption{Comparison of the error bounds of velocity along $\vx$ direction based on 100
Monte Carlo runs.}\label{fig_03}
\end{figure}

\begin{figure}[h]
%\hrule %\vspace*{8cm}
\vbox to 6cm{\vfill \hbox to \hsize{\hfill
\scalebox{0.6}[0.59]{\includegraphics{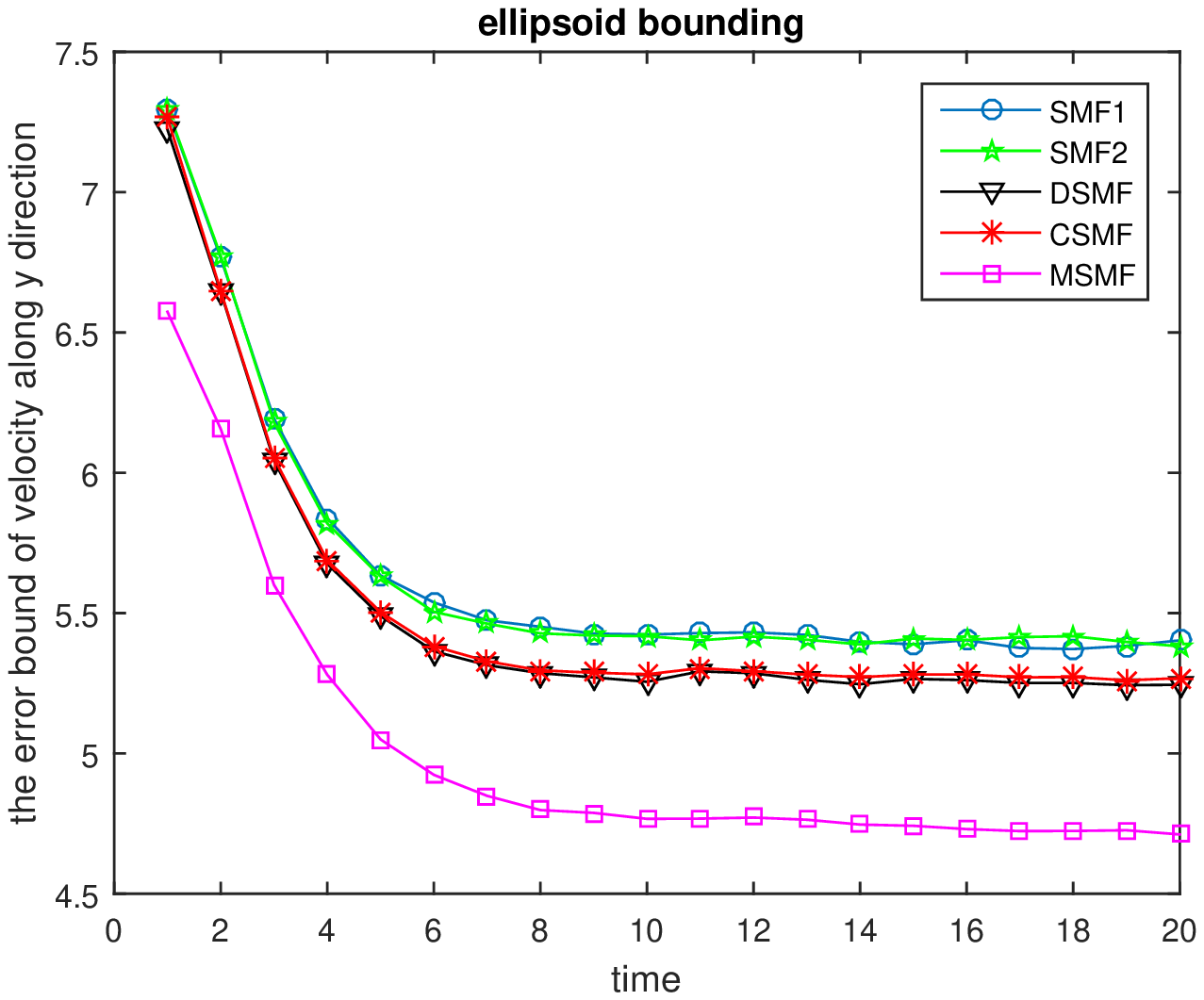}} \hfill}\vfill}
%\hrule
\caption{Comparison of the error bounds of velocity along $\vy$ direction based on 100
Monte Carlo runs.}\label{fig_04}
\end{figure}

\section{Conclusion}\label{sec_6}
This paper has derived the centralized and distributed set-membership information fusion algorithms for multisensor nonlinear dynamic system via minimizing state bounding ellipsoid. Firstly, both of them can be converted into an SDP problem which can be efficiently computed, respectively. Secondly, their analytical solutions can be derived surprisingly by using decoupling technique. It is very interesting that they are quite similar in form to the classic information filter in MSE sense. In the two analytical fusion formulae, the information of each sensor can be clearly characterized, and the knowledge of the correlation among measurement noises across sensors are not required. Finally,  multi-algorithm fusion has been used to minimize the size of the state bounding ellipsoid by complementary advantages of multiple parallel algorithms. A typical example in target tracking has showed that multi-algorithm fusion performs better than both the centralized and distributed fusion. Future work will include, in multisensor nonlinear dynamic system setting, multiple target tracking, sensor management and heterogeneous sensor fusion.

\section{Appendix}\label{sec_7}
\begin{lemma}\label{lem_1}\cite{Boyd-ElGhaoui-Feron-Balakrishnan94}
Let $\mF_0(\eta), \mF_1(\eta),\ldots, \mF_p(\eta)$, be quadratic functions in variable $\eta\in\mathcal {R}^{n}$
\begin{eqnarray}
\mF_i(\eta)=\eta^T\mT_i\eta, ~~i=0,\ldots, p
\end{eqnarray}
with $\mT_i=\mT_i^T$. Then the implication
\begin{eqnarray}
\mF_1(\eta)\leq0,\ldots,\mF_p(\eta)\leq0\Rightarrow\mF_0(\eta)\leq0
\end{eqnarray}
holds if there exist $\tau_1,\ldots,\tau_p\geq0$ such that
\begin{eqnarray}
\mT_0-\sum_{i=1}^p\tau_i\mT_i\preceq0.
\end{eqnarray}
\end{lemma}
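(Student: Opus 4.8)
The plan is to prove the sufficiency (the ``if'') direction of the $\mathcal{S}$-procedure by a direct pointwise argument; this is the elementary direction, and no convexity or constraint-qualification machinery is needed. First I would fix an arbitrary $\eta\in\mathcal{R}^{n}$ that satisfies all the premises, that is, $\mF_i(\eta)=\eta^T\mT_i\eta\leq0$ for every $i=1,\ldots,p$, and then show that the assumed matrix inequality forces $\mF_0(\eta)\leq0$. Since $\eta$ is arbitrary among such vectors, this establishes the implication.

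The key step is to read the semidefinite condition $\mT_0-\sum_{i=1}^p\tau_i\mT_i\preceq0$ as a statement about every quadratic form, and in particular to evaluate it at the fixed $\eta$. This yields
\begin{eqnarray}
\nonumber \eta^T\Bigl(\mT_0-\sum_{i=1}^p\tau_i\mT_i\Bigr)\eta\leq0,
\end{eqnarray}
which rearranges to $\mF_0(\eta)\leq\sum_{i=1}^p\tau_i\,\mF_i(\eta)$. Because each multiplier satisfies $\tau_i\geq0$ and each premise gives $\mF_i(\eta)\leq0$, every summand $\tau_i\,\mF_i(\eta)$ is nonpositive, so the right-hand side is $\leq0$; hence $\mF_0(\eta)=\eta^T\mT_0\eta\leq0$, as required.

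There is essentially no obstacle here: the only point to notice is that $\preceq0$ is a uniform (all-vectors) statement, so it may be specialized to the particular $\eta$ at hand, after which nonnegativity of the $\tau_i$ does all the work. I would remark that this sufficiency direction needs no assumption on the number $p$ of constraints or on the signs of the individual matrices $\mT_i$; it is the converse (lossless) direction of the $\mathcal{S}$-procedure that is the genuinely hard statement and requires additional hypotheses, but that converse is not invoked in the containment arguments underlying Lemmas~\ref{thm_1}, \ref{thm_2}, and \ref{thm_3}, where only the ``if'' direction is used to certify that the true state lies in the constructed bounding ellipsoid.
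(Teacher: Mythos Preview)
Your argument is correct: specializing the semidefinite inequality to the given $\eta$ and using $\tau_i\geq0$ together with $\mF_i(\eta)\leq0$ immediately yields $\mF_0(\eta)\leq0$. The paper does not supply a proof of this lemma at all---it is quoted directly from \cite{Boyd-ElGhaoui-Feron-Balakrishnan94}---so there is nothing to compare against; your two-line sufficiency proof is exactly the standard one and is entirely adequate for how the lemma is used in Lemmas~\ref{thm_1}, \ref{thm_2}, and \ref{thm_3}.
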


\begin{lemma}\label{lem_2}
Schur Complements \cite{Boyd-ElGhaoui-Feron-Balakrishnan94}: Given constant matrices $\mA$, $\mB$, $\mC$, where $\mC=\mC^T$ and $\mA=\mA^T<0$, then
\begin{eqnarray}
\mC-\mB^T\mA^{-1}\mB\preceq0
\end{eqnarray}
if and only if
\begin{eqnarray}
\left[
  \begin{array}{cc}
    \mA & \mB \\
    \mB^T & \mC \\
  \end{array}
\right]\preceq0
\end{eqnarray}
or equivalently
\begin{eqnarray}
\left[
  \begin{array}{cc}
    \mC & \mB^T \\
    \mB & \mA \\
  \end{array}
\right]\preceq0
\end{eqnarray}
\end{lemma}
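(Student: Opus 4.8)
The plan is to collapse the block $2\times 2$ matrix inequality onto its Schur complement by a single congruence transformation and then read off definiteness from Sylvester's law of inertia. First I would record that the hypothesis $\mA=\mA^T\prec 0$ makes $\mA$ nonsingular, so $\mA^{-1}$ exists and, since $\mC=\mC^T$, the Schur complement $\mC-\mB^T\mA^{-1}\mB$ is a well-defined symmetric matrix. This reduces the two-sided claim to showing that the full block matrix and the block-diagonal matrix $\diag(\mA,\,\mC-\mB^T\mA^{-1}\mB)$ carry the same semidefiniteness information.

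The key step is the block $LDL^T$ (Gaussian-elimination) factorization, which I would verify by direct multiplication:
\[
\begin{bmatrix}\mA & \mB\\ \mB^T & \mC\end{bmatrix}
=\mT^T\begin{bmatrix}\mA & 0\\ 0 & \mC-\mB^T\mA^{-1}\mB\end{bmatrix}\mT,
\qquad
\mT=\begin{bmatrix}\mI & \mA^{-1}\mB\\ 0 & \mI\end{bmatrix}.
\]
Because $\mT$ is unit upper triangular it is invertible, so the two sides are congruent. By Sylvester's law of inertia they share the same signature, and since the block $\mA\prec 0$ supplies only strictly negative eigenvalues, the whole matrix is negative semidefinite precisely when the remaining diagonal block $\mC-\mB^T\mA^{-1}\mB$ is negative semidefinite. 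This is exactly the asserted equivalence.

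For the second, reordered form I would observe that simultaneously swapping the two block rows and the two block columns is a symmetric permutation $\mP^T(\cdot)\mP$, hence again a congruence; applying it to the first equivalence interchanges the roles of $\mA$ and $\mC$ and yields the stated variant. I do not expect a genuine obstacle here, as this is the classical Schur complement lemma; the only point requiring care is that the strict hypothesis $\mA\prec 0$ (rather than $\mA\preceq 0$) is what guarantees invertibility and lets the inertia argument isolate the Schur complement cleanly. As an inertia-free alternative, one can argue directly on the quadratic form: for fixed $\vv$ the concave quadratic $q(\vu)=\vu^T\mA\vu+2\vv^T\mB^T\vu+\vv^T\mC\vv$ is maximized at $\vu=-\mA^{-1}\mB\vv$ with maximal value $\vv^T(\mC-\mB^T\mA^{-1}\mB)\vv$, so nonpositivity of the full form is equivalent to nonpositivity of the Schur complement.
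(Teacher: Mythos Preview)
Your proof is correct and is the standard congruence/block-$LDL^T$ argument for the Schur complement lemma. Note, however, that the paper does not actually prove this lemma: it is stated as a cited auxiliary result from \cite{Boyd-ElGhaoui-Feron-Balakrishnan94} and used as a black box, so there is no ``paper's own proof'' to compare against. Your argument is exactly the classical one that reference would give.
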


\begin{lemma}\label{lem_3}
Decoupling \cite{ElGhaoui-Calafiore01}: Let $\mX_{ij}, 1\leq i \leq j \leq 2$ be matrices of appropriate size, with $\mX_{ii}$ square and symmetric. The problem (in variable $\mX,\mZ$)
\begin{eqnarray}\label{Eqpre_30}
\min_{\mX,\mZ}~~f(\mX)~ \mbox{subject to} \left[
                                     \begin{array}{ccc}
                                       \mX & \mZ & \mB \\
                                       \mZ^T & \mX_{11} & \mX_{12} \\
                                       \mB^T & \mX_{12}^T & \mX_{22} \\
                                     \end{array}
                                   \right]\succeq 0
\end{eqnarray}
is feasible if and only if
\begin{eqnarray}\label{Eqpre_31}
                               \left[
                                     \begin{array}{cc}
                                        \mX_{11} & \mX_{12} \\
                                       \mX_{12}^T & \mX_{22} \\
                                     \end{array}
                                   \right]\succeq 0.
\end{eqnarray}
In this case, problem (\ref{Eqpre_30})is equivalent to the problem (in variable $\mX$ only)
\begin{eqnarray}\label{Eqpre_32}
\min_{\mX}~~f(\mX)~ \mbox{subject to} \left[
                                     \begin{array}{cc}
                                        \mX & \mB \\
                                       \mB^T & \mX_{22} \\
                                     \end{array}
                                   \right]\succeq 0.
\end{eqnarray}
Moreover, If the problem (\ref{Eqpre_32}) is feasible, which means that
 \begin{eqnarray}
 \nonumber \mX\succeq\mB\mX_{22}^{+}\mB^T, ~(\mI-\mX_{22}^{+}\mX_{22})\mB^T=0.
\end{eqnarray}
Suppose the objective function is either the trace function or log-det function, then $f(\mX_1)\geq f(\mX_2)$ whenever $\mX_1\succeq\mX_2$. Thus, (\ref{Eqpre_30}) admits a unique optimal variable given by $\mX=\mB\mX_{22}^{+}\mB^T, \mZ=\mB\mX_{22}^{+}\mX_{12}^T$, where $\mX_{22}^{+}$ is the pseudo-inverse of $\mX_{22}$.
\end{lemma}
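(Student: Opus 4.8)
The plan is to reduce the three-block semidefinite inequality in (\ref{Eqpre_30}) to the two-block one in (\ref{Eqpre_32}) by taking a Schur complement with respect to the trailing block $\mX_{22}$ and then eliminating the coupling through a judicious choice of $\mZ$. Group the constraint matrix of (\ref{Eqpre_30}) as $M=\left[\begin{smallmatrix}P & Q\\ Q^T & \mX_{22}\end{smallmatrix}\right]$ with $P=\left[\begin{smallmatrix}\mX & \mZ\\ \mZ^T & \mX_{11}\end{smallmatrix}\right]$ and $Q=\left[\begin{smallmatrix}\mB\\ \mX_{12}\end{smallmatrix}\right]$. The necessity half of the feasibility claim is immediate: every principal submatrix of a positive semidefinite matrix is itself positive semidefinite, and the trailing $2\times2$ block of $M$ is exactly the matrix in (\ref{Eqpre_31}).

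For the converse and the equivalence I would invoke the Schur complement (Lemma \ref{lem_2}) in its generalized form for a positive semidefinite, possibly singular pivot: with $\mX_{22}\succeq0$, one has $M\succeq0$ iff the range condition $(\mI-\mX_{22}\mX_{22}^{+})Q^T=0$ holds together with $P-Q\mX_{22}^{+}Q^T\succeq0$. Expanding $P-Q\mX_{22}^{+}Q^T$ produces a $2\times2$ block matrix whose $(2,2)$ block is $\mX_{11}-\mX_{12}\mX_{22}^{+}\mX_{12}^T$ and whose $(1,2)$ block is $\mZ-\mB\mX_{22}^{+}\mX_{12}^T$. The \emph{decoupling step} is then to set $\mZ=\mB\mX_{22}^{+}\mX_{12}^T$, which annihilates the off-diagonal block; the complement splits into the two \emph{independent} requirements $\mX-\mB\mX_{22}^{+}\mB^T\succeq0$ and $\mX_{11}-\mX_{12}\mX_{22}^{+}\mX_{12}^T\succeq0$. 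The second requirement, together with the piece $(\mI-\mX_{22}\mX_{22}^{+})\mX_{12}^T=0$ of the range condition, is precisely (\ref{Eqpre_31}) by a second application of the generalized Schur complement, and it contains neither $\mX$ nor $\mZ$; the first is exactly the constraint of (\ref{Eqpre_32}), again by Schur. This shows that (\ref{Eqpre_30}) is feasible iff (\ref{Eqpre_31}) holds (the remaining range requirement $(\mI-\mX_{22}\mX_{22}^{+})\mB^T=0$ being the consistency condition recorded in the statement), and that the two optimization problems are equivalent.

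With the problems identified, I would exploit the monotonicity hypothesis on $f$. Since $f$ is the trace or the log-det, $\mX_1\succeq\mX_2$ gives $f(\mX_1)\geq f(\mX_2)$, so the minimizer of (\ref{Eqpre_32}) is the $\preceq$-least feasible $\mX$, namely $\mX=\mB\mX_{22}^{+}\mB^T$; strict monotonicity of these objectives (adding a nonzero positive semidefinite matrix strictly increases the trace, respectively the log-det) makes the minimizer unique. To recover $\mZ$ at the optimum, observe that when $\mX=\mB\mX_{22}^{+}\mB^T$ the $(1,1)$ block $\mX-\mB\mX_{22}^{+}\mB^T$ of the complement vanishes, and a positive semidefinite matrix with a zero diagonal block must have the corresponding off-diagonal block equal to zero; this forces $\mZ-\mB\mX_{22}^{+}\mX_{12}^T=0$. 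One thereby obtains the stated unique optimal pair $\mX=\mB\mX_{22}^{+}\mB^T$, $\mZ=\mB\mX_{22}^{+}\mX_{12}^T$.

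The main obstacle is the careful treatment of the singular case $\mX_{22}\not\succ0$: the ordinary Schur complement of Lemma \ref{lem_2} presumes an invertible pivot, so I must replace it throughout by its generalized form and track the attendant range conditions $(\mI-\mX_{22}\mX_{22}^{+})\mB^T=0$ and $(\mI-\mX_{22}\mX_{22}^{+})\mX_{12}^T=0$. The latter is automatic from $\mX_{22}\succeq0$ and the feasibility of (\ref{Eqpre_31}), whereas the former is a genuine consistency constraint, recorded in the ``Moreover'' clause of the statement and vacuous precisely when $\mX_{22}\succ0$ (as happens in the fusion problems, where the pivot carries a term $\tau^u\mI$ with $\tau^u>0$). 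Verifying that the pseudo-inverse identities and the zero-block argument survive without an invertibility hypothesis is the only delicate point; everything else is routine linear algebra.
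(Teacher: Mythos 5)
Your proposal is correct, but there is nothing in the paper to measure it against: the paper states Lemma \ref{lem_3} with the citation \cite{ElGhaoui-Calafiore01} and gives no proof of it, so your argument supplies precisely what the authors import as a black box. Your route is the natural one and matches the spirit of the cited reference: partition the constraint matrix of (\ref{Eqpre_30}) with pivot $\mX_{22}$, apply the generalized (pseudo-inverse) Schur complement criterion $M\succeq0$ iff $\mX_{22}\succeq0$, $(\mI-\mX_{22}\mX_{22}^{+})Q^{T}=0$ and $P-Q\mX_{22}^{+}Q^{T}\succeq0$, note that the choice $\mZ=\mB\mX_{22}^{+}\mX_{12}^{T}$ block-diagonalizes the complement so that the constraint splits into the constraint of (\ref{Eqpre_32}) plus the condition (\ref{Eqpre_31}), and finally combine monotonicity of the trace and log-det objectives with the fact that a positive semidefinite matrix whose diagonal block vanishes must have the corresponding off-diagonal block vanish, which pins down the unique optimal pair $\mX=\mB\mX_{22}^{+}\mB^{T}$, $\mZ=\mB\mX_{22}^{+}\mX_{12}^{T}$. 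Two observations, both to your credit: you correctly notice that the ``feasible iff (\ref{Eqpre_31})'' claim is not literally true when $\mX_{22}$ is singular --- the principal submatrix in $\mX,\mB,\mX_{22}$ forces the extra consistency condition $(\mI-\mX_{22}\mX_{22}^{+})\mB^{T}=0$, which the lemma records only obliquely in its ``Moreover'' clause --- and you correctly observe that this is vacuous in the paper's applications, where the pivot is nonsingular (the proofs of Theorem \ref{cor_1} and Theorem \ref{cor_2} argue that the optimal multipliers are strictly positive, so the plain Schur complement of Lemma \ref{lem_2} would suffice there). Two minor expository caveats you may wish to make explicit: in the converse direction of the equivalence (a feasible pair $(\mX,\mZ)$ for (\ref{Eqpre_30}) yields $\mX$ feasible for (\ref{Eqpre_32})) one cannot choose $\mZ$, so one must invoke the fact that the diagonal blocks of the positive semidefinite complement are positive semidefinite for \emph{whatever} $\mZ$ is given --- your earlier principal-submatrix remark covers this but the step deserves a sentence; and strict monotonicity of log-det, hence uniqueness under that objective, requires $\mB\mX_{22}^{+}\mB^{T}\succ0$, which is an implicit well-posedness assumption of the lemma (log-det is only defined on the positive definite cone) rather than a defect of your argument.
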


\begin{proof}[Proof of Lemma \ref{thm_1}]: Note that $\vx_k^c\in\mathcal {E}_{k}^c$ is equivalent to $\vx_k=\hat{\vx}_{k}^c+\mE_{k}^c\vu_{k}$, $\parallel \vu_{k}\parallel\leq 1$,
where $\mE_{k}^c$ is a Cholesky factorization of $\mP_{k}^c$. By the nonlinear state equations (\ref{Eqpre_1}) and (\ref{Eqpre_3}),
\begin{eqnarray}
\nonumber \vx_{k+1}-\hat{\vx}_{k+1|k}^c&=& f_k(\vx_k)+\vw_k-\hat{\vx}_{k+1|k}^c\\
\nonumber  &=&f_k(\hat{\vx}_{k}^c+\mE_{k}^c\vu_{k})+\vw_k-\hat{\vx}_{k+1|k}^c\\
\label{Eqpre_33} &=&f_k(\hat{\vx}_{k}^c)+\mJ_{f_k}\mE_{k}\vu_{k}+\ve_{f_k}
+\mB_{f_k}\Delta_{f_k}+\vw_k-\hat{\vx}_{k+1|k}^c.
\end{eqnarray}
If we denote by
\begin{eqnarray}
\label{Eqpre_34} \xi=[1, ~\vu_{k}^T, ~\vw_k^T, ~\Delta_{f_k}^T]^T,
\end{eqnarray}
then (\ref{Eqpre_33}) can be rewritten as
\begin{eqnarray}
\label{Eqpre_35} \vx_{k+1}-\hat{\vx}_{k+1|k}^c&=&  \Phi_{k+1|k}^c(\hat{\vx}_{k+1|k}^c)\xi
\end{eqnarray}
where $\Phi_{k+1|k}^c(\hat{\vx}_{k+1|k}^c)$ is denoted by (\ref{Eqpre_22}).

Moreover, the condition that $\vx_{k+1}\in\mathcal {E}_{k+1|k}^c $, whenever, I) $\vx_k^c\in\mathcal {E}_k^c$, II) the process noise $\vw_k\in\mW_k$, III) the high-order remainders of state function $\Delta f_k(\vu_k)\in  \mathcal {E}_{f_k} $, which are equivalent to
\begin{eqnarray}
\label{Eqpre_36}  \xi^T\Phi_{k+1|k}(\hat{\vx}_{k+1|k}^c)^T(\mP_{k+1|k}^c)^{-1}\Phi_{k+1|k}(\hat{\vx}_{k+1|k}^c)\xi\leq1,
\end{eqnarray}
whenever
\begin{eqnarray}
\label{Eqpre_37}  \parallel \vu_{k}\parallel&\leq& 1 ,\\
\label{Eqpre_38}  \vw_k^T\mQ_k^{-1}\vw_k&\leq& 1,\\
\label{Eqpre_39}\parallel \Delta_{f_k}\parallel&\leq &1.
\end{eqnarray}
The equations (\ref{Eqpre_37})--(\ref{Eqpre_39}) are equivalent to
\begin{eqnarray}
\label{Eqpre_40}  \xi^T\diag(-1,\mI,0,0)\xi&\leq& 0,\\
\label{Eqpre_41}  \xi^T\diag(-1,0,\mQ_k^{-1},0)\xi&\leq& 0,\\
\label{Eqpre_42}  \xi^T\diag(-1,0,0,\mI)\xi&\leq& 0.
\end{eqnarray}
where $\mI$ and $0$ are matrices with compatible dimensions.

From Lemma \ref{lem_1}, a sufficient condition such that the inequalities
(\ref{Eqpre_40})-(\ref{Eqpre_42}) imply (\ref{Eqpre_36}) to hold is that
there exist nonnegative scalars
$\tau^u\geq0, \tau^w\geq0,\tau^f\geq0$, such that
\begin{eqnarray}
\nonumber&& \Phi_{k+1|k}(\hat{\vx}_{k+1|k}^c)^T(\mP_{k+1|k}^c)^{-1}\Phi_{k+1|k}(\hat{\vx}_{k+1|k}^c)\\
\nonumber&&-\diag(1,0,0,0)\\
\nonumber&&-\tau^u\diag(-1,I,0,0)\\
\nonumber&&-\tau^w\diag(-1,0,\mQ_k^{-1},0)\\
\label{Eqpre_43}&&-\tau^f\diag(-1,0,0,I)\preceq0
\end{eqnarray}
Furthermore, (\ref{Eqpre_43}) is
written in the following compact form:
\begin{eqnarray}
\label{Eqpre_44} \Phi_{k+1|k}(\hat{\vx}_{k+1|k}^c)^T(\mP_{k+1|k}^c)^{-1}\Phi_{k+1|k}(\hat{\vx}_{k+1|k}^c)-\Xi\preceq0
\end{eqnarray}
where $\Xi$ is denoted by (\ref{Eqpre_23}). Applying Lemma \ref{lem_2}, (\ref{Eqpre_44}) is
equivalent to
\begin{eqnarray}
\label{Eqpre_45}&&\left[\begin{array}{cc}
    \mP_{k+1|k}^c&\Phi_{k+1|k}(\hat{\vx}_{k+1|k}^c)\\
    (\Phi_{k+1|k}(\hat{\vx}_{k+1|k}^c))^T& ~~\Xi\\
\end{array}\right]\succeq0\\
%\label{Eqpre_127}&&\Bigg[\begin{array}{c}
%-\mP_{k+1|k}\\[3mm]
%(\Phi_{k+1|k}(\hat{\vx}_{k+1|k})(\Psi_{k+1|k}(\vy_{k}))_{\bot})^T\end{array}\\
%\nonumber&& \qquad\qquad~~\begin{array}{c}
%\Phi_{k+1|k}(\hat{\vx}_{k+1|k})(\Psi_{k+1|k}(\vy_{k}))_{\bot}\\[3mm]
%-(\Psi_{k+1|k}(\vy_{k}))_{\bot}^T\Xi(\Psi_{k+1|k}(\vy_{k}))_{\bot}\end{array} \Bigg]\preceq0,\\[3mm]
\label{Eqpre_46}&&\mP_{k+1|k}^c\succ0.
\end{eqnarray}
Therefore, if
$\hat{x}_{k+1|k}^c$, $\mP_{k+1|k}^c$ satisfy (\ref{Eqpre_45}), then the state $x_{k+1}$ belongs to $\mathcal
{E}_{k+1|k}^c$, whenever, I) $\vx_k^c$ is in $\mathcal {E}_k^c$, II) the process noise $\vw_k\in\mW_k$, III) the high-order remainders of state function $\Delta f_k(\vu_k)\in  \mathcal {E}_{f_k} $.

Summarizing  the above results,  the computation of the
predicted  bounding ellipsoid by minimizing a
size measure $f(\mP_{k+1|k}^c)$ (\ref{Eqpre_18}) is Lemma \ref{thm_1}.
\end{proof}
\begin{proof}[Proof of Theorem \ref{cor_1}]: If we partition the left side of (\ref{Eqpre_21}) by appropriate block, then it can be rewritten as
\begin{eqnarray}\label{Eqpre_29}
                                  \left[
                                     \begin{array}{ccc}
                                       \mP_{k+1|k}^c & \mZ & \mB \\
                                       \mZ^T & \mX_{11} & \mX_{12} \\
                                       \mB^T & \mX_{12}^T & \mX_{22} \\
                                     \end{array}
                                   \right]\succeq 0,
\end{eqnarray}
where
\begin{eqnarray}
\nonumber\mZ&=&f_k(\hat{\vx}_{k}^c)+\ve_{f_k}-\hat{\vx}_{k+1|k},\\
\nonumber\mB&=&[\mJ_{f_k}\mE_{k}^c, ~\mI,~\mB_{f_k}],\\
\nonumber\mX_{11}&=&1-\tau^u-\tau^w-\tau^f,\\
\nonumber\mX_{22}&=&\diag(\tau^u\mI,\tau^w\mQ_k^{-1},\tau^f\mI),\\
\nonumber\mX_{12}&=&0.
\end{eqnarray}
%$\mZ=f_k(\hat{\vx}_{k})+\ve_{f_k}-\hat{\vx}_{k+1|k}$, $\mB=[\mJ_{f_k}\mE_{k}, ~\mI,~\mB_{f_k}]$, $\mX_{11}=1-\tau^u-\tau^w-\tau^f$, $\mX_{12}=0$, $\mX_{22}=\diag(\tau^u\mI,\tau^w\mQ_k^{-1},\tau^f\mI)$.
Based on the decoupling technique in Lemma \ref{lem_3}, the above matrix inequality is feasible if and only if
\begin{eqnarray}
 \nonumber                                 \left[
                                     \begin{array}{cc}
                                       \mX_{11} & \mX_{12} \\
                                        \mX_{12}^T & \mX_{22} \\
                                     \end{array}
                                   \right]\succeq 0.
\end{eqnarray}
From the expression of $\mX_{11}, \mX_{12}, \mX_{22}$, it is also equivalent to
\begin{eqnarray}
\nonumber \tau^u+\tau^w+\tau^f\leq1, ~-\tau^u\leq0,~ -\tau^w\leq0, ~-\tau^f\leq0.
\end{eqnarray}
Thus, the optimization problem of Lemma \ref{thm_1}
\begin{eqnarray}
 \nonumber \min_{\tau^u,\tau^w,\tau^f}\min_{\mP_{k+1|k}^c,\hat{\vx}_{k+1|k}^c}~~ tr(\mP_{k+1|k}^c)~\mbox{subject to}~ (\ref{Eqpre_19})-(\ref{Eqpre_20})~ and ~(\ref{Eqpre_29}),
\end{eqnarray}
which, by Lemma \ref{lem_3}, is equivalent to
\begin{eqnarray}
 \label{Eqpre_47} &&\min~~ tr(\mB\mX_{22}^{+}\mB^T)\\
  \nonumber&&~~\mbox{subject to} -\tau^u\leq0,~ -\tau^w\leq0,~ -\tau^f\leq0,~\tau^u+\tau^w+\tau^f\leq1\\
 \label{Eqpre_48} &&~~\qquad\qquad(\mI-\mX_{22}^{+}\mX_{22})\mB^T=0.
\end{eqnarray}
It is easy to see that $\mX_{22}$ is nonsingular according to (\ref{Eqpre_48}), then, the above optimization problem is equivalent to
\begin{eqnarray}
\nonumber &&\min~~ tr(\frac{\mJ_{f_k}\mP_k^c\mJ_{f_k}^T}{\tau^u}+\frac{\mQ_k}{\tau^w}+\frac{\mP_{f_k}}{\tau^f})\\
  \nonumber&&~~\mbox{subject to} -\tau^u<0,~ -\tau^w<0,~ -\tau^f<0,~\tau^u+\tau^w+\tau^f\leq1
\end{eqnarray}
where $\mP_{f_k}=\mB_{f_k}\mB_{f_k}^T$.
%\begin{eqnarray}
%\nonumber \mX&=&\mB\mX_{22}^{+}\mB^T
%   =\frac{\mJ_{f_k}\mP_k\mJ_{f_k}^T}{\tau^u}+\frac{\mQ_k}{\tau^w}+\frac{\mP_{f_k}}{\tau^f}\\
%\nonumber \mP_{f_k}&=&\mB_{f_k}\mB_{f_k}^T.
%\end{eqnarray}
 Therefore, based on Lagrange dual function, the analytically optimal solution can be obtained in (\ref{Eqpre_24})-(\ref{Eqpre_28}).
\end{proof}

\begin{proof}[Proof of Lemma \ref{thm_2}]: Note that we have get $\vx_{k+1}\in\mathcal {E}_{k+1|k}^c$ in prediction step, which is equivalent to $\vx_{k+1}=\hat{\vx}_{k+1|k}^c+\mE_{k+1|k}^c\vu_{k+1|k}$, $\parallel\vu_{k+1|k}\parallel\leq 1$,
where $\mE_{k+1|k}^c$ is a Cholesky factorization of $\mP_{k+1|k}^c$, then,
\begin{eqnarray}
\label{Eqpre_63}  \vx_{k+1}-\hat{\vx}_{k+1}^c&=& \hat{\vx}_{k+1|k}^c+\mE_{k+1|k}^c\vu_{k+1|k}-\hat{\vx}_{k+1}^c
\end{eqnarray}
and by the nonlinear measurement equations (\ref{Eqpre_2}) and (\ref{Eqpre_4})
\begin{eqnarray}
\nonumber \vy_{k+1}^i&=& h_{k+1}^i(\vx_{k+1})+\vv_{k+1}^i\\
\label{Eqpre_64}   &=&h_{k+1}^i(\hat{\vx}_{k+1|k}^c)+\mJ_{h_{k+1}^i}\mE_{k+1|k}\vu_{k+1|k}
+\ve_{h_{k+1}^i}+\mB_{h_{k+1}^i}\Delta_{h_{k+1}^i}+\vv_{k+1}^i
\end{eqnarray}
If we denote by
\begin{eqnarray}
\label{Eqpre_65} \xi=[1, ~\vu_{k+1|k}^T, ~\underbrace{\vv_{k+1}^{1^T},\ldots, ~\vv_{k+1}^{L^T}}_{L~blocks}, ~\underbrace{\Delta_{h_{k+1}^1}^{T},\ldots,~\Delta_{h_{k+1}^L}^{T}}_{L~blocks}]^T,
\end{eqnarray}
then (\ref{Eqpre_63}) and (\ref{Eqpre_64}) can be rewritten as
\begin{eqnarray}
\label{Eqpre_66} \vx_{k+1}-\hat{\vx}_{k+1}^c&=&  \Phi_{k+1}^c(\hat{\vx}_{k+1}^c)\xi\\
\label{Eqpre_67} 0&=&  \Psi_{k+1}^c(\vy_{k+1}^i)\xi,
\end{eqnarray}
where $\Phi_{k+1}^c(\hat{\vx}_{k+1}^c)$ and $\Psi_{k+1}^c(\vy_{k+1}^i)$ are denoted by (\ref{Eqpre_53}) and (\ref{Eqpre_54}), respectively.

Moreover, the condition that $\vx_{k+1}\in\mathcal {E}_{k+1}^c$ whenever I) $\vx_{k+1}$ is in $\mathcal {E}_{k+1|k}^c$
II) measurement noises $\vv_{k+1}^i$  are bounded
in ellipsoidal sets, i.e., $\vv_{k+1}^i\in\mV_{k+1}^i$,  III) the high-order remainders of measurement function $\Delta_{h_{k+1}^i}\in  \mathcal {E}_{h_{k+1}^i} $, , $i=1,\ldots,L$, which are equivalent to
\begin{eqnarray}
\label{Eqpre_68}  \xi^T\Phi_{k+1}^c(\hat{\vx}_{k+1}^c)^T(\mP_{k+1}^c)^{-1}\Phi_{k+1}^c(\hat{\vx}_{k+1}^c)\xi\leq1,
\end{eqnarray}
whenever
\begin{eqnarray}
\label{Eqpre_69}  \parallel \vu_{k+1|k}\parallel&\leq& 1 ,\\
\label{Eqpre_70} \vv_{k+1}^{i^T}\mR_{k+1}^{i^{-1}}\vv_{k+1}^i&\leq& 1,\\
\label{Eqpre_71}\parallel \Delta_{h_{k+1}^i}\parallel&\leq &1,~~i=1,\ldots,L.
\end{eqnarray}
The equations (\ref{Eqpre_69})--(\ref{Eqpre_71}) are equivalent to
\begin{eqnarray}
\label{Eqpre_72}  \xi^T\diag(-1,\mI,\underbrace{0,\ldots,0}_{L~blocks},\underbrace{0,\ldots,0}_{L~blocks})\xi&\leq& 0,\\
\label{Eqpre_73}  \xi^T\diag(-1,0,\vdots\underbrace{0,\ldots,\mR_{k+1}^{i^{-1}},\ldots,0}_{the~i-th~ block~ is~ \mR_{k+1}^{i^{-1}} }\vdots,\underbrace{0,\ldots,0}_{L~blocks})\xi&\leq& 0,\\
\label{Eqpre_74}  \xi^T\diag(-1,0,\underbrace{0,\ldots,0}_{L~blocks},\vdots\underbrace{0,\ldots,\mI,\ldots,0}_{the~i-th~ block~ is~ \mI}\vdots)\xi&\leq& 0,
\end{eqnarray}
where $\mI$ and $0$ are matrices with compatible dimensions.

By  $\mathcal {S}$-procedure Lemma \ref{lem_1} and
(\ref{Eqpre_67}), a sufficient condition such that the inequalities
(\ref{Eqpre_72})-(\ref{Eqpre_74}) imply (\ref{Eqpre_68}) to hold is that
there exist scalars $\tau_i^y$ and nonnegative scalars
$\tau^u\geq0, \tau_i^v\geq0, \tau_i^h\geq0$, such that
\begin{eqnarray}
\nonumber&& \Phi_{k+1}^c(\hat{\vx}_{k+1}^c)^T(\mP_{k+1}^c)^{-1}\Phi_{k+1}^c(\hat{\vx}_{k+1}^c)\\
\nonumber&&-\diag(1,0,\vdots0,\ldots,0,\vdots0,\ldots,0,\vdots0,\ldots,0)\\
\nonumber&&-\tau^u\diag(-1,\mI,\vdots0,\ldots,0,\vdots0,\ldots,0,\vdots0,\ldots,0)\\
\nonumber&&-\sum_{i=1}^L\tau_i^v\diag(-1,0,\vdots\underbrace{0,\ldots,\mR_{k+1}^{i^{-1}},\ldots,0,}_{the~i-th~ block~ is~ \mR_{k+1}^{i^{-1}} }\vdots0,\ldots,0,\vdots0,\ldots,0)\\
\nonumber&&-\sum_{i=1}^L\tau_i^h\diag(-1,0,\vdots0,\ldots,0,\vdots0,\ldots,0,\vdots\underbrace{0,\ldots,\mI,\ldots,0}_{the~i-th~ block~ is~ \mI})\\
\label{Eqpre_75}&&-\sum_{i=1}^L\tau_i^y \Psi_{k+1}^c(\vy_{k+1}^i)^T \Psi_{k+1}^c(\vy_{k+1}^i) \preceq0
\end{eqnarray}
Furthermore, (\ref{Eqpre_75}) is
written in the following compact form:
%\begin{eqnarray}
%\nonumber&& \Phi_{k+1}(\hat{\vx}_{k+1})^T(\mP_{k+1})^{-1}\Phi_{k+1}(\hat{\vx}_{k+1})-\Xi\\
%\label{Eqpre_148}&&-\tau^y \Psi_{k+1}(\vy_{k+1})^T \Psi_{k+1}(\vy_{k+1}) \preceq0
%\end{eqnarray}
\begin{eqnarray}
\label{Eqpre_76} \Phi_{k+1}^c(\hat{\vx}_{k+1}^c)^T(\mP_{k+1}^c)^{-1}\Phi_{k+1}^c(\hat{\vx}_{k+1}^c)-\Xi- (\Psi_{k+1}^c)^T \diag(\tau_1^y,\ldots,\tau_L^y)\Psi_{k+1}^c \preceq0
\end{eqnarray}
where $\Xi$ and $\Psi_{k+1}^c$ are denoted by (\ref{Eqpre_56}) and (\ref{Eqpre_55}), respectively.

If we denote
$(\Psi_{k+1}^c)_{\bot}$ is the orthogonal complement of $\Psi_{k+1}^c$, then
(\ref{Eqpre_76}) is equivalent to
%\begin{eqnarray}
%\nonumber && ((\Psi_{k+1}(\vy_{k+1}))_{\bot})^T\Phi_{k+1}(\hat{\vx}_{k+1})^T(\mP_{k+1})^{-1}\\
%\nonumber &&\cdot\Phi_{k+1}(\hat{\vx}_{k+1})(\Psi_{k+1}(\vy_{k}))_{\bot}\\
%\label{Eqpre_149}&&-((\Psi_{k+1}(\vy_{k+1}))_{\bot})^T\Xi(\Psi_{k+1}(\vy_{k+1}))_{\bot} \preceq0
%\end{eqnarray}
\begin{eqnarray}
\nonumber && ((\Psi_{k+1}^c)_{\bot})^T\Phi_{k+1}^c(\hat{\vx}_{k+1}^c)^T(\mP_{k+1}^c)^{-1}\Phi_{k+1}^c(\hat{\vx}_{k+1}^c)(\Psi_{k+1}^c)_{\bot}\\
\label{Eqpre_77}&&-((\Psi_{k+1}^c)_{\bot})^T\Xi(\Psi_{k+1}^c)_{\bot} \preceq0
\end{eqnarray}
Using Schur complements Lemma \ref{lem_2}, (\ref{Eqpre_77}) is
equivalent to
\begin{eqnarray}
\label{Eqpre_78}&&\left[\begin{array}{cc}
    -\mP_{k+1}^c&\Phi_{k+1}^c(\hat{\vx}_{k+1}^c)(\Psi_{k+1}^c)_{\bot}\\
    (\Phi_{k+1}^c(\hat{\vx}_{k+1}^c)(\Psi_{k+1}^c)_{\bot})^T& ~~-(\Psi_{k+1}^c)_{\bot}^T\Xi(\Psi_{k+1}^c)_{\bot}\\
\end{array}\right]\preceq0.\\
%\label{Eqpre_150}&&\Bigg[\begin{array}{c}
%-\mP_{k+1}\\[3mm]
%(\Phi_{k+1}(\hat{\vx}_{k+1})(\Psi_{k+1}(\vy_{k+1}))_{\bot})^T\end{array}\\
%\nonumber&& \qquad\qquad~~\begin{array}{c}
%\Phi_{k+1}(\hat{\vx}_{k+1})(\Psi_{k+1}(\vy_{k+1}))_{\bot}\\[3mm]
%-(\Psi_{k+1}(\vy_{k+1}))_{\bot}^T\Xi(\Psi_{k+1}(\vy_{k+1}))_{\bot}\end{array} \Bigg]\preceq0,\\[3mm]
\label{Eqpre_79}&&-\mP_{k+1}^c\prec0.
\end{eqnarray}

Therefore, if
$\hat{x}_{k+1}^c$, $\mP_{k+1}^c$ satisfy (\ref{Eqpre_78})-(\ref{Eqpre_79}), then the state $x_{k+1}$ belongs to $\mathcal
{E}_{k+1}^c$, whenever I) $\vx_{k+1}$ is in $\mathcal {E}_{k+1|k}^c$
II) measurement noises $\vv_{k+1}^i$ are bounded
in ellipsoidal sets, i.e., $\vv_{k+1}^i\in\mV_{k+1}^i$, III) the high-order remainders of measurement function $\Delta_{h_{k+1}^i}\in  \mathcal {E}_{h_{k+1}^i} $, , $i=1,\ldots,L$.

Summarizing  the above results,  the computation of the
measurement update  bounding ellipsoid by minimizing a
size measure $f(\mP_{k+1}^c)$ (\ref{Eqpre_49}) is  Lemma \ref{thm_2}.
\end{proof}

\begin{proof}[Proof of Theorem \ref{cor_2}]:
In view of the optimization problem in Lemma \ref{thm_2}, we can apply Lemma \ref{lem_3} to the linear matrix inequalities (\ref{Eqpre_80}), with $\mZ=\hat{\vx}_{k+1}^c-\hat{\vx}_{k+1|k}^c$, and the rest of matrices defined appropriately. Thus, the problem
\begin{eqnarray}
 \nonumber \min_{\tau^u,\tau_i^v,\tau_i^h}\min_{\mP_{k+1}^c,\hat{\vx}_{k+1}^c}~~ f(\mP_{k+1}^c)~\mbox{subject to}~ (\ref{Eqpre_50}),~(\ref{Eqpre_51})~ and ~(\ref{Eqpre_80}),
\end{eqnarray}
which is equivalent to
\begin{eqnarray}
 \nonumber \min_{\tau^u,\tau_i^v,\tau_i^h}~~ f(\bar{\mX}(\tau^u,\tau_i^v,\tau_i^h))~\mbox{subject to}~ (\ref{Eqpre_58}), (\ref{Eqpre_59}),(\mI-(\Psi_{22}^T\Xi_{22}\Psi_{22})^{+}\Psi_{22}^T\Xi_{22}\Psi_{22})\mB^T=0,
\end{eqnarray}
where $\bar{\mX}(\tau^u,\tau_i^v,\tau_i^h)=\mB(\Psi_{22}^T\Xi_{22}\Psi_{22})^{+}\mB^T$, $i=1,\ldots,L$.

If one of $\tau^u,\tau_i^v,\tau_i^h$, $i=1,\ldots,L$, is zero, then the feasible sets of $\mP_{k+1}^c$ and $\hat{\vx}_{k+1}^c$ become smaller from (\ref{Eqpre_76}), and the  objective value  becomes larger. Thus, the optimal $\tau^u,\tau_i^v,\tau_i^h$, $i=1,\ldots,L$ should be greater than zero, and $\Psi_{22}^T\Xi_{22_{opt}}\Psi_{22}$ be nonsingular.
If $\mB(\Psi_{22}^T\Xi_{22_{opt}}\Psi_{22})^{-1}\mB^T$ is the optimal value of the above optimization problem, then, by using Lemma \ref{lem_3} again, the optimal ellipsoid $\mathcal{E}_{k+1}^c$ is given by
\begin{eqnarray}
\label{Eqpre_200} \mP_{k+1}^c&=&\mB(\Psi_{22}^T\Xi_{22_{opt}}\Psi_{22})^{-1}\mB^T,\\
\label{Eqpre_201} \mZ&=&\mB(\Psi_{22}^T\Xi_{22_{opt}}\Psi_{22})^{-1} \Psi_{22}^T\Xi_{22_{opt}}\Psi_{21}.
\end{eqnarray}
Based on (\ref{Eqpre_201}) and $\mZ=\hat{\vx}_{k+1}^c-\hat{\vx}_{k+1|k}^c$, we retrieve the center of the ellipsoid as
\begin{eqnarray}
\label{Eqpre_202}\hat{\vx}_{k+1}^c=\hat{\vx}_{k+1|k}^c+\mB(\Psi_{22}^T\Xi_{22_{opt}}\Psi_{22})^{-1}(\Psi_{22}^T\Xi_{22_{opt}}\Psi_{21}).
\end{eqnarray}

By the definition of $\Psi_{22}$ and $\Xi_{22}$ in (\ref{Eqpre_118}) and  (\ref{Eqpre_120}),
\begin{eqnarray}
\nonumber &&\Psi_{22}^T\Xi_{22}\Psi_{22}=\\
\nonumber
                              &&\left[
                               \begin{array}{cccc}
                                 \tau^u\mP_{k+1|k}^{c^{-1}}+\sum_{i=1}^L\tau_i^v{\mJ_{h_{k+1|k}^i}^T}\mR_{k+1}^{i^{-1}}\mJ_{h_{k+1|k}^i} & -\tau_1^v{\mJ_{h_{k+1|k}^1}^T}\mR_{k+1}^{1^{-1}} & \ldots &  -\tau_L^v{\mJ_{h_{k+1|k}^L}^T}\mR_{k+1}^{L^{-1}} \\
                                 -\tau_1^v({\mJ_{h_{k+1|k}^1}^T}\mR_{k+1}^{1^{-1}})^T& \tau_1^v\mR_{k+1}^{1^{-1}}+\tau_1^h\mP_{h_{k+1}^1}^{-1} &\ldots & 0 \\
                                 \vdots & \vdots & \ddots & \vdots \\
                                 -\tau_L^v({\mJ_{h_{k+1|k}^L}^T}\mR_{k+1}^{L^{-1}})^T & 0 & \ldots & \tau_L^v\mR_{k+1}^{L^{-1}}+\tau_L^h\mP_{h_{k+1}^L}^{-1} \\
                               \end{array}
                             \right]
% \nonumber  &&\Psi_{22}^T\Xi_{22}\Psi_{21}=\\
% \nonumber  && [-(\sum_{i=1}^{L}\tau_i^v{\mJ_{h_{k+1|k}^i}^T}\mR_{k+1}^{i^{-1}}(h_{k+1}^i(\hat{\vx}_{k+1|k}^c)-\vy_{k+1}^i))^T,
% (\tau_1^v\mR_{k+1}^{1^{-1}}(h_{k+1}^1(\hat{\vx}_{k+1|k}^c)-\vy_{k+1}^1)+\tau_1^h\mP_{h_{k+1}^1}^{-1}\ve_{h_{k+1}^1})^T\\
% \nonumber  &&~ \ldots,~
% (\tau_L^v\mR_{k+1}^{L^{-1}}(h_{k+1}^L(\hat{\vx}_{k+1|k}^c)-\vy_{k+1}^L)+\tau_1^h\mP_{h_{k+1}^L}^{-1}\ve_{h_{k+1}^L})^T]^T
\end{eqnarray}
then
\begin{eqnarray}
\nonumber &&\mB(\Psi_{22}^T\Xi_{22}\Psi_{22})^{-1}\mB^T=[\mI ~\underbrace{0,\ldots, 0}_{L~ blocks}](\Psi_{22}^T\Xi_{22}\Psi_{22})^{-1}[\mI ~\underbrace{0,\ldots, 0}_{L~ blocks}]^T\\
%\nonumber &=&(\tau^u\mP_{k+1|k}^{c^{-1}}+\sum_{i=1}^L\tau_i^v{\mJ_{h_{k+1|k}^i}^T}\mR_{k+1}^{i^{-1}}\mJ_{h_{k+1|k}^i}-
%\sum_{i=1}^L \tau_i^v{\mJ_{h_{k+1|k}^i}^T}\mR_{k+1}^{i^{-1}}(\tau_i^v\mR_{k+1}^{i^{-1}}+\tau_i^h\mP_{h_{k+1}^i}^{-1})^{-1}
%\mR_{k+1}^{i^{-1}}\mJ_{h_{k+1|k}^i})^{-1}\\
\nonumber &=&\left(\tau^u\mP_{k+1|k}^{c^{-1}}+\sum_{i=1}^L{\mJ_{h_{k+1|k}^i}^T}(\frac{\mR_{k+1}^{i}}{\tau_i^v}+\frac{\mP_{h_{k+1}^i}}{\tau_i^h})^{-1}
\mJ_{h_{k+1|k}^i}\right)^{-1}.
\end{eqnarray}
Thus, (\ref{Eqpre_62}) can be obtained by (\ref{Eqpre_200}). Moreover, substituting (\ref{Eqpre_122}), (\ref{Eqpre_118}) and  (\ref{Eqpre_120}) into (\ref{Eqpre_202}), then (\ref{Eqpre_180}) can be achieved.
\end{proof}

\begin{proof}[Proof of Lemma \ref{thm_3}]: Note that $\vx_{k+1}\in\mathcal {E}_{k+1|k}^d$ is equivalent to $\vx_{k+1}=\hat{\vx}_{k+1|k}^d+\mE_{k+1|k}^d\vu_{k+1|k}$, $\parallel \vu_{k+1|k}\parallel\leq 1$,
where $\mE_{k+1|k}^d$ is a Cholesky factorization of $\mP_{k+1}^d$, then
\begin{eqnarray}
\label{Eqpre_89} \vx_{k+1}-\hat{\vx}_{k+1}^d&=& \hat{\vx}_{k+1|k}^d+\mE_{k+1|k}^d\vu_{k+1|k}-\hat{\vx}_{k+1}^d.
\end{eqnarray}
If we denote by
\begin{eqnarray}
\label{Eqpre_90} \xi=[1, ~\vu_{k+1|k}^T]^T,
\end{eqnarray}
then (\ref{Eqpre_89}) can be rewritten as
\begin{eqnarray}
\label{Eqpre_91} \vx_{k+1}-\hat{\vx}_{k+1}^d&=& \Phi_{k+1}^d\xi
\end{eqnarray}
where $\Phi_{k+1}^d$ is denoted by (\ref{Eqpre_85}). Similarly, we have
\begin{eqnarray}
\label{Eqpre_92} \vx_{k+1}-\hat{\vx}_{k+1}^i&=&  \Phi_{k+1}^i\xi
\end{eqnarray}
where $\Phi_{k+1}^i$ is denoted by (\ref{Eqpre_86}).

Moreover, the condition that $\vx_{k+1}\in\mathcal {E}_{k+1}^d$, whenever, I) $\vx_{k+1}$ is in $\mathcal {E}_{k+1|k}^d$, II)$\vx_{k+1}\in\mathcal {E}_{k+1}^i$, for $i=1,\ldots,L$, is equivalent to
\begin{eqnarray}
\label{Eqpre_93}  \xi^T(\Phi_{k+1}^d)^T(\mP_{k+1}^d)^{-1}\Phi_{k+1}^d\xi\leq1,
\end{eqnarray}
whenever, for $i=1,\ldots,L$,
\begin{eqnarray}
\label{Eqpre_94}  \parallel \vu_{k+1|k}\parallel&\leq& 1 ,\\
\label{Eqpre_97} \xi^T(\Phi_{k+1}^i)^T(\mP_{k+1}^i)^{-1}\Phi_{k+1}^i\xi&\leq &1,
\end{eqnarray}
The equations (\ref{Eqpre_94})--(\ref{Eqpre_97}) are equivalent to
\begin{eqnarray}
\label{Eqpre_98}  \xi^T\diag(-1,I)\xi&\leq& 0,\\
\label{Eqpre_101}   \xi^T[(\Phi_{k+1}^i)^T(\mP_{k+1}^i)^{-1}\Phi_{k+1}^i+\diag(-1,0)]\xi&\leq& 0,
\end{eqnarray}
where $I$ and $0$ are matrices with compatible dimensions.

By  $\mathcal {S}$-procedure Lemma \ref{lem_1}, a sufficient condition such that the inequalities
(\ref{Eqpre_98})-(\ref{Eqpre_101}) imply (\ref{Eqpre_93}) to hold is that
there exist nonnegative scalars
$\tau^u\geq0, \tau_i^y\geq0$, $i=1,\ldots,L$, such that
\begin{eqnarray}
\nonumber&& (\Phi_{k+1}^d)^T(\mP_{k+1}^d)^{-1}\Phi_{k+1}^d-\diag(1,0)-\tau^u\diag(-1,I)\\
\label{Eqpre_102}&&~~-\sum_{i=1}^L\tau_i^y [(\Phi_{k+1}^i)^T (\mP_{k+1}^i)^{-1}\Phi_{k+1}^i+\diag(-1,0)] \preceq0
\end{eqnarray}
Furthermore, (\ref{Eqpre_102}) is
written in the following compact form:
%\begin{eqnarray}
%\nonumber&& \Phi_{k+1|k}(\hat{\vx}_{k+1|k})^T(\mP_{k+1|k})^{-1}\Phi_{k+1|k}(\hat{\vx}_{k+1|k})-\Xi\\
%\label{Eqpre_125}&&\qquad\qquad-\tau^y \Psi_{k+1|k}(\vy_{k})^T \Psi_{k+1|k}(\vy_{k}) \preceq0
%\end{eqnarray}
\begin{eqnarray}
\label{Eqpre_103} (\Phi_{k+1}^d)^T(\mP_{k+1}^d)^{-1}\Phi_{k+1}^d-\Xi- \Pi \preceq0
\end{eqnarray}
where $\Xi$ and $\Pi$ are denoted by (\ref{Eqpre_87}) and (\ref{Eqpre_88}), respectively.

Using Schur complements Lemma \ref{lem_2}, (\ref{Eqpre_103}) is
equivalent to
\begin{eqnarray}
\label{Eqpre_104}&&\left[\begin{array}{cc}
    -\mP_{k+1}^d&\Phi_{k+1}^d\\
    (\Phi_{k+1}^d)^T& ~~-\Xi- \Pi\\
\end{array}\right]\preceq0\\
\label{Eqpre_105}&&-\mP_{k+1}^d\prec0.
\end{eqnarray}

Therefore, if
$\hat{x}_{k+1|k}^d$, $\mP_{k+1}^d$ satisfy (\ref{Eqpre_104})-(\ref{Eqpre_105}), then the state $x_{k+1}$ belongs to $\mathcal
{E}_{k+1}^d$, whenever, I) $\vx_{k+1}$ is in $\mathcal {E}_{k+1|k}^d$, II) $x_{k+1}$ belongs to $\mathcal{E}_{k+1}^i$, for $i=1,\ldots,L$.

Summarizing  the above results,  the computation of the
 bounding ellipsoid for distributed fusion by minimizing a
size measure $f(\mP_{k+1}^d)$ (\ref{Eqpre_81}) is  Lemma \ref{thm_3}.
\end{proof}

\end{document}